\documentclass[11pt]{article} 
\usepackage[margin=1in]{geometry} 
\pdfoutput=1 

\usepackage[usenames,dvipsnames,svgnames,table]{xcolor} 

\usepackage{amsmath,amsfonts,amssymb,amsthm}
\usepackage{mathtools}
\usepackage{braket}

\usepackage{enumitem} 



\usepackage{mmap}
\usepackage[T1]{fontenc}

\usepackage[utf8]{inputenc}
\usepackage{csquotes}
\usepackage[english]{babel}

\usepackage[
    activate={true,nocompatibility},
    final, 
    tracking=true,
    kerning=true,
    spacing=true,
    factor=1100,
    stretch=10,
    shrink=10
    ]{microtype}
\microtypecontext{spacing=nonfrench}

\usepackage[backend=biber,
              style=alphabetic,
        maxbibnames=99,
      minalphanames=3,
      maxalphanames=4,
            backref=true]{biblatex}

\usepackage{hyperref}
\hypersetup{
    pdftitle={QMA vs QCMA}, 
    pdfauthor={Anonymous submission},
    colorlinks=true, 
    linkcolor=blue, 
    citecolor=blue, 
    urlcolor=green 
}


\AtEveryBibitem{
 \clearlist{address}
 \clearfield{date}
 \clearlist{location}
 \clearfield{month}
 \clearfield{series}
 \clearfield{pages}
 \clearlist{organization}
 \clearfield{number}
 \clearlist{language}

 \ifentrytype{book}{}{
  \clearlist{publisher}
  \clearname{editor}
  \clearfield{issn}
  \clearfield{isbn}
  \clearfield{volume}
 }
}

\DeclareFieldFormat{eprint:eccc}{
\mkbibacro{ECCC}\addcolon\space\ifhyperref
    {\href{http://eccc.hpi-web.de/report/#1/}{\nolinkurl{#1}}}
    {\nolinkurl{#1}}}
\DeclareFieldAlias{eprint:ECCC}{eprint:eccc}
\DeclareFieldFormat{eprint:iacr}{Cryptology ePrint\addcolon\space\ifhyperref
    {\href{https://eprint.iacr.org/#1}{\nolinkurl{#1}}}
    {\nolinkurl{#1}}}
\DeclareFieldAlias{eprint:IACR}{eprint:iacr}
\DeclareFieldFormat{eprint:arxiv}{arXiv\addcolon\space\ifhyperref
    {\href{https://arxiv.org/abs/#1}{\nolinkurl{#1}}}
    {\nolinkurl{#1}}}


\DeclareFieldFormat[article]{title}{#1}
\DeclareFieldFormat[inproceedings]{title}{#1}
\DeclareFieldFormat[incollection]{title}{#1}
\DeclareFieldFormat[online]{title}{#1}

\renewbibmacro{in:}{}

\newcommand{\lName}{1}

\newcommand{\donothing}[1]{#1}

\newcommand{\JACM}{\if\lName1\donothing{Journal of the {ACM}}\else{JACM}\fi}
\newcommand{\SICOMP}{\if\lName1\donothing{{SIAM} Journal on Computing}\else{SICOMP}\fi}
\newcommand{\ToC}{\if\lName1\donothing{Theory of Computing}\else{ToC}\fi}
\newcommand{\ToCGS}{\if\lName1\donothing{Theory of Computing Graduate Surveys}\else{ToC}\fi}
\newcommand{\TOCT}{\if\lName1\donothing{{ACM} Transactions on Computation Theory}\else{TOCT}\fi}
\newcommand{\ToIT}{\if\lName1\donothing{{IEEE} Transactions on Information Theory}\else{TOCT}\fi}
\newcommand{\CCjournal}{\if\lName1\donothing{Computational Complexity}\else{CC}\fi}
\newcommand{\CJTCS}{\if\lName1\donothing{Chicago Journal of Theoretical Computer Science}\else{CJTCS}\fi}

\newcommand{\TCS}{\if\lName1\donothing{Theoretical Computer Science}\else{TCS}\fi}
\newcommand{\IPL}{\if\lName1\donothing{Information Processing Letters}\else{IPL}\fi}
\newcommand{\JCSS}{\if\lName1\donothing{Journal of Computer and System Sciences}\else{JCSS}\fi}

\newcommand{\RSA}{\if\lName1\donothing{Random Structures and Algorithms}\else{RSA}\fi}
\newcommand{\JCTA}{\if\lName1\donothing{Journal of Combinatorial Theory, Series A}\else{JCTA}\fi}
\newcommand{\JCTB}{\if\lName1\donothing{Journal of Combinatorial Theory, Series B}\else{JCTB}\fi}
\newcommand{\PJM}{\if\lName1\donothing{Pacific Journal of Mathematics}\else{PJM}\fi}
\newcommand{\QICjournal}{\if\lName1\donothing{Quantum Information and Computation}\else{QIC}\fi}
\newcommand{\IJQI}{\if\lName1\donothing{International Journal of Quantum Information}\else{IJQI}\fi}
\newcommand{\PRA}{\if\lName1\donothing{Physical Review A}\else{PRA}\fi}
\newcommand{\PRL}{\if\lName1\donothing{Physical Review Letters}\else{PRL}\fi}
\newcommand{\VLDB}{\if\lName1\donothing{International Journal on Very Large Data Bases}\else{VLDB}\fi}


\DefineBibliographyStrings{english}{%
  backrefpage = {p{.}},
  backrefpages = {pp{.}},
}



\newtheorem{theorem}{Theorem}

\newtheorem{lemma}[theorem]{Lemma}

\newtheorem{corollary}[theorem]{Corollary}
\newtheorem{definition}[theorem]{Definition}
\newtheorem{conjecture}[theorem]{Conjecture}

\theoremstyle{definition}


\newcommand{\eq}[1]{\hyperref[eq:#1]{(\ref*{eq:#1})}}
\renewcommand{\sec}[1]{\hyperref[sec:#1]{Section~\ref*{sec:#1}}}
\newcommand{\thm}[1]{\hyperref[thm:#1]{Theorem~\ref*{thm:#1}}}
\newcommand{\lem}[1]{\hyperref[lem:#1]{Lemma~\ref*{lem:#1}}}
\newcommand{\defn}[1]{\hyperref[def:#1]{Definition~\ref*{def:#1}}}
\newcommand{\prop}[1]{\hyperref[prop:#1]{Proposition~\ref*{prop:#1}}}
\newcommand{\cor}[1]{\hyperref[cor:#1]{Corollary~\ref*{cor:#1}}}
\newcommand{\fig}[1]{\hyperref[fig:#1]{Figure~\ref*{fig:#1}}}
\newcommand{\tab}[1]{\hyperref[tab:#1]{Table~\ref*{tab:#1}}}
\newcommand{\alg}[1]{\hyperref[alg:#1]{Algorithm~\ref*{alg:#1}}}
\newcommand{\app}[1]{\hyperref[app:#1]{Appendix~\ref*{app:#1}}}
\newcommand{\conj}[1]{\hyperref[conj:#1]{Conjecture~\ref*{conj:#1}}}
\newcommand{\chap}[1]{\hyperref[chap:#1]{Chapter~\ref*{chap:#1}}}


\newcommand{\comment}[1]{\textup{{\color{red}#1}}}




\DeclareMathOperator{\poly}{poly}
\DeclareMathOperator{\polylog}{polylog}


\DeclareMathOperator{\Dom}{Dom}

\newcommand{\B}{\{0,1\}}
\newcommand{\Ba}{\{0,1,*\}}

\newcommand{\tR}{\widetilde{R}}

\newcommand{\clA}{\mathcal{A}}

\newcommand{\clP}{\mathcal{P}}
\newcommand{\eps}{\varepsilon}

\DeclareMathAlphabet{\mathbbold}{U}{bbold}{m}{n}

\DeclareMathOperator*{\E}{\mathbb{E}} 

\DeclareMathOperator{\bN}{\mathbb{N}}
\DeclareMathOperator{\bF}{\mathbb{F}}

\DeclareMathOperator{\supp}{supp}



\DeclareMathOperator{\RS}{RS}

\DeclareMathOperator{\density}{density}
\DeclareMathOperator{\RU}{RU}


\newcommand{\QMA}{\mathsf{QMA}}
\newcommand{\QCMA}{\mathsf{QCMA}}
\newcommand{\MA}{\mathsf{MA}}
\renewcommand{\O}{\mathcal{O}}

\newcommand{\BQP}{\mathsf{BQP}}
\newcommand{\BQPp}{\mathsf{BQP}_{/\mathrm{poly}}}
\newcommand{\BQPq}{\mathsf{BQP}_{/\mathrm{qpoly}}}
\newcommand{\FBQPp}{\mathsf{FBQP}_{/\mathrm{poly}}}
\newcommand{\FBQPq}{\mathsf{FBQP}_{/\mathrm{qpoly}}}


\allowdisplaybreaks

\addbibresource{main.bib} 

\begin{document}

\title{Oracle separation of QMA and QCMA with bounded adaptivity}

\author{
Shalev Ben{-}David\\
\small University of Waterloo\\
\small \texttt{shalev.b@uwaterloo.ca}
\and
Srijita Kundu\\
\small University of Waterloo\\
\small \texttt{srijita.kundu@uwaterloo.ca}
}


\date{}
\maketitle

\begin{abstract}
We give an oracle separation between $\QMA$ and $\QCMA$ for quantum algorithms that have bounded adaptivity in their oracle queries;
that is, the number of rounds of oracle calls is small, though each round may involve polynomially many queries in parallel. Our oracle construction is a simplified version of the construction used recently by Li, Liu, Pelecanos, and Yamakawa (2023), who showed an oracle separation between $\QMA$ and $\QCMA$ when the quantum algorithms are only allowed to access the oracle classically.
To prove our results, we introduce a property of relations called \emph{slipperiness}, which may be useful for getting a fully general classical oracle separation between $\QMA$ and $\QCMA$.
\end{abstract}

{\small\tableofcontents}
\clearpage

\section{Introduction}
It is a long-standing open problem in quantum complexity theory whether the two possible quantum analogs of the complexity class $\mathsf{NP}$ are equivalent. $\QMA$ is defined as the class of decision problems that are solvable by a polynomial-time quantum algorithm that has access to a polynomial-sized
\emph{quantum} witness, whereas $\QCMA$ is the class of decision problems that are solvable by a polynomial-time quantum algorithm that only has access to the polynomial-sized \emph{classical} witness.
In other words, the question asks: are quantum proofs more
powerful than classical proofs?

While the inclusion $\QCMA \subseteq \QMA$ is easy to see, the question of whether these two classes are actually equal, which was first posed by Aharonov and Naveh \cite{AN02}, remains unanswered. Indeed, an unconditional separation between these classes is beyond currently known techniques.

An easier, but still unsolved, problem is to show an oracle separation between $\QMA$ and $\QCMA$. This is because oracle separations in the Turing machine model can be shown by means of separations in the much simpler model of \emph{query complexity},
where similar separations between complexity classes
are routinely shown (for example, a recent oracle separation between $\BQP$ and $\mathsf{PH}$ was provided in \cite{RT18}).
The problem of finding an oracle separation between
$\QMA$ and $\QCMA$ has been a longstanding focus of the quantum
computing community; it boils down to asking whether quantum
proofs are more powerful than classical proofs in the query model.

\subsection{Previous work}

The first progress on the question of  an oracle separation of $\QMA$ and $\QCMA$ was made by Aaronson and Kuperberg \cite{AK07}, who showed that there is a quantum oracle, i.e., a blackbox unitary, relative to which $\QMA \neq \QCMA$. Later, Fefferman and Kimmel \cite{FK18} showed that the separation also holds under what they called an ``in-place permutation oracle'', which is still inherently quantum. Ideally, we would like to get these separations in the standard model of classical oracles: classical functions that a quantum algorithm may query in superposition. \cite{BFM23} showed separations between $\QMA$ and $\QCMA$ in other non-standard oracle models.

Very recently, there has been some progress on this question, with two different variations of the standard classical oracle model. Natarajan and Nirkhe \cite{NN22} showed an oracle separation relative to a ``distributional oracle''. This essentially means that the classical oracle is drawn from a distribution, which the prover knows, but the specific instance drawn is not known to the prover. Therefore, the witness only depends on the distribution over the oracles, which makes it easier to show $\QCMA$ lower bounds. Following this, \cite{LLPY23} showed a separation with a classical oracle that is fully known to the prover, but assuming the verifier is only allowed to access this classical oracle classically, i.e., the verifier is not allowed to make superposition queries
(this makes the class similar to $\MA$ in terms
of its query power and witness type). This model is also simpler to analyze because whatever information the verifier gets from the oracle by classically querying it, could also have been provided as the classical $\QCMA$ witness. \cite{LLPY23} also gave an alternate construction of a distributional oracle separation, with a simpler proof than \cite{NN22}. Their constructions are based on the relational problem used by Yamakawa and Zhandry \cite{YZ22}, in their result on quantum advantage without structure.

Closely related to the $\QMA$ vs $\QCMA$ question is the $\BQPq$ vs $\BQPp$ question. $\BQPq$ is the class of decision problems that are solvable by a polynomial-time quantum algorithm with access to polynomial-sized quantum \emph{advice}, which depends non-uniformly on the length of inputs, but nothing else. $\BQPq$ is the class of decision problems solvable by a polynomial-time quantum algorithm with access to polynomial-sized classical advice. Most works which have found oracle separations for $\QMA$ vs $\QCMA$ in various oracle models, such as \cite{AK07, NN22, LLPY23}, have also found oracle separations between $\BQPq$ and $\BQPp$ with related constructions in the same oracle models.

The question of the relative power of classical vs quantum advice was recently resolved unconditionally (without oracles) for relational problems by Aaronson, Buhrman and Kretschmer \cite{ABK23}, who showed an unconditional separation between $\FBQPq$ and $\FBQPp$. $\FBQPq$ and $\FBQPp$ are the classes of relational problems analogous to $\BQPq$ and $\BQPp$ respectively. Their result was based on observing that separations between quantum and classical one-way communication complexity can be used to show separations between classical and quantum advice. The reason their result only works for the relation classes is that a separation in one-way communication complexity which satisfies the necessary conditions can only hold for relational problems. The specific relational problem used in \cite{ABK23} is known as the Hidden Matching problem. But as was observed in \cite{LLPY23}, the Yamakawa-Zhandry problem \cite{YZ22} also achieves the required communication separation, and could have been used instead. In light of this, the constructions in \cite{YZ22} can viewed as a way to convert relational separations in one-way communication complexity, which correspond to relational separations for quantum vs classical advice, to separations for decision $\QMA$ vs $\QCMA$, and $\BQPq$ vs $\BQPp$, relative to classically accessible oracles. The construction is not blackbox --- it does not work if the Hidden Matching Problem is used instead of the Yamakawa-Zhandry problem, though it plausibly might work with a parallel repetition of the former.

\subsection{Our results}
Unlike previous work,
prove an oracle separation between $\QMA$ and $\QCMA$
relative to a \emph{bona fide} regular oracle with regular
(quantum) queries. Our catch is, instead, that we only allow
the algorithms \emph{bounded adaptivity}.

Bounded adaptivity means that the number of rounds of queries made by the algorithms is small, although there can be polynomially many queries in each round.
Although our result is not formally stronger than those of
\cite{NN22} and \cite{LLPY23},
we feel our result is intuitively closer to a full $\QMA$-$\QCMA$ separation, as it allows the full power of classical proofs and some of the power of quantum queries. Our main result is formally stated below.

\begin{theorem}\label{thm:TMSep}
There is an oracle $\O\colon\B^*\to\B$ such that
$\QCMA^{\O,r}\ne \QMA^{\O,r}$, for $r=o(\log n/\log\log n)$.
\end{theorem}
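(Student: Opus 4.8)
The plan is to adapt the template of \cite{LLPY23}, replacing their ``classical-query'' restriction by the ``bounded-adaptivity'' restriction and isolating the combinatorial core as a property of the underlying relation. First I would fix a relational problem $R$ --- morally a depth-$d$ iterate of the Yamakawa--Zhandry relation \cite{YZ22} with $d = \Theta(\log n/\log\log n)$ --- whose solutions encode a length-$d$ sequence of choices, each constrained (via random local predicates built into the oracle) by the previous ones. The oracle $\O$ at each input length then consists of: the data defining $R$; a ``certification'' part that lets one check in a single parallel round whether a given complete choice-sequence is a valid solution of $R$; and a ``reward'' part that releases a secret bit, but only in response to a quantum state that is (close to) the canonical superposition over solutions of $R$ produced by the natural quantum solver. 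A ``yes'' oracle has the reward part wired to solutions of $R$; a ``no'' oracle replaces it by an independent random bit. The problem is to output the secret bit.

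For containment in $\QMA^{\O,r}$: Merlin sends the canonical solution-superposition $|\psi\>$. Arthur (i) verifies, using \emph{one} round of polynomially many parallel oracle queries, that $|\psi\>$ is supported on valid solutions --- crucially, because each branch of $|\psi\>$ carries the \emph{entire} length-$d$ choice sequence, all $d$ levels of constraints can be checked simultaneously, with no adaptivity --- and (ii) feeds $|\psi\>$ to the reward part in a second round to read off the secret bit. Completeness is immediate from the definition of the honest witness; soundness holds because on a ``no'' oracle the released bit is independent of everything Arthur can see. As this uses $O(1)$ rounds, it lies in $\QMA^{\O,r}$ for every $r=\omega(1)$, in particular for the range of $r$ in the theorem.

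The main work is showing the problem is not in $\QCMA^{\O,r}$ for $r = o(\log n/\log\log n)$. Fix any such verifier --- $r$ rounds of $\poly(n)$ parallel queries each --- and any classical witness string $w$ of length $m = \poly(n)$. I would show that, averaged over the planted distribution, the verifier's acceptance probabilities on a ``yes'' oracle and on the corresponding ``no'' oracle differ by $o(1)$, for \emph{every} $w$; since a correct protocol needs a good $w$ on ``yes'' instances and must reject on ``no'' instances, this is a contradiction. Any such bias can only come from interacting nontrivially with the reward part, which by construction essentially requires producing the canonical solution-superposition. This is where \emph{slipperiness} enters: the property says, roughly, that no $\poly(n)$ bits of classical leakage about the planted $R$ enable an $o(\log n/\log\log n)$-adaptive, $\poly(n)$-parallel quantum algorithm to output a valid solution of $R$ with non-negligible probability. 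Granting slipperiness of the iterated Yamakawa--Zhandry relation, a hybrid argument over the $r$ rounds --- successively reprogramming the reward part --- shows the ``yes'' and ``no'' oracles are indistinguishable to the pair $(w,\text{verifier})$, completing the proof.

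The crux, and the expected main obstacle, is proving slipperiness. The intuition is a progress/potential argument: against a depth-$d$ nested instance, one round of $\poly(n)$ parallel (even quantum) queries can only ``certify'' one further level of the choice sequence, so $r < d$ rounds leave the algorithm short of a complete solution, while the $\poly(n)$ fan-out per round means an $r$-round algorithm only ever ``sees'' an $n^{O(r)} = n^{o(\log n/\log\log n)}$-sized slice of the structure --- too small to contain a solution once $d = \Theta(\log n/\log\log n)$. Turning this into a theorem requires (a) a definition of slipperiness robust to arbitrary polynomial classical advice, so that conditioning on $w$ does not break it, and (b) controlling what a \emph{quantum} parallel round can extract, presumably via a compression / one-way-communication argument tailored to the Yamakawa--Zhandry structure, as in \cite{YZ22,LLPY23} but now bookkept level-by-level. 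I expect essentially all of the technical difficulty to live here; the remainder is the standard $\QMA$/$\QCMA$ oracle-separation scaffolding.
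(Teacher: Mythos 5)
Your high-level scaffolding (fix a witness $w$, argue indistinguishability, reduce to a slipperiness-style lemma about the relation) echoes the paper, but the construction and the lemma you invoke are genuinely different, and there is a gap that would stop the approach as stated. The paper does \emph{not} iterate the Yamakawa--Zhandry relation to depth $d$; it uses a single instance $R_f$ and defines the oracle $O[f,E](x,u)$ to be a plain classical \emph{verification oracle} for $R_f$, with the $1$-inputs being $O[f,\emptyset]$ and the $0$-inputs being erasures $O[f,E]$ with $|E|\ge(2/3)2^n$. The QMA protocol is one round: use the YZ quantum advice to find a valid $(x,u)$ for a random $x$ and query it. The round restriction in the lower bound is not built into the problem at all --- the paper conjectures the problem is hard for unbounded rounds --- it is an artifact of the proof technique, which removes one round per iteration while the size of the conditioning partial assignment grows roughly by a $2^{(\log n)^{O(1)}}$ factor each time, and this can only be sustained $o(\log n/\log\log n)$ times before the assignment exceeds $2^n$.

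The concrete gap: your ``reward part'' is described as an oracle that ``releases a secret bit, but only in response to a quantum state that is (close to) the canonical superposition over solutions.'' A classical oracle $\O\colon\B^*\to\B$ cannot condition its output on the query state being a particular superposition; that is precisely the kind of behavior available to a blackbox unitary (as in Aaronson--Kuperberg) but forbidden in the standard classical-oracle model the theorem targets. To make this classical you would need to fall back on exactly the paper's device: make the oracle a verification (or value-revealing) map on classical strings $(x,u)$, and rely on the fact that only the quantum witness lets one \emph{find} such a string. Separately, the ``slipperiness'' you invoke (``no $\poly(n)$ bits of leakage enable an $r$-round algorithm to solve $R$'') is essentially the lower bound itself restated as an assumption; the paper's slipperiness (Definition~\ref*{def:slippery}) is a purely combinatorial statement about partial assignments and is actually \emph{proved} for $\tR_C$ via list-recoverability of a folded Reed--Solomon code (Lemma~\ref*{lem:slippery}). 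Finally, the step that actually drives the round removal --- decomposing the posterior into dense distributions (Lemma~\ref*{lem:density}) and applying a Markov/hybrid argument (Lemmas~\ref*{lem:hybrid}, \ref*{lem:erasure}) so that each round can be replaced by a fake oracle on a positive-measure set of $f$ --- does not appear in your sketch and is where all the quantitative bookkeeping lives.
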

In the above statement, $\QMA^{\O,r}$ is the class of decision problems solvable by QMA algorithms that have oracle access to $\O$, and make at most $r$ batches of parallel queries to $\O$; $\QCMA^{\O,r}$ is defined analogously. The parameter $n$ is the efficiency parameter
(so the number of queries is $\poly(n)$).


\begin{theorem}\label{thm:QuerySep}
There is a function family $F=\{F_N\}_{N\in I}$ which is efficiently
computable in 1-round query $\QMA$, but for which the growth rate of
$\mathrm{QCMA}^r(F_N)$ for $r=o(\log\log N/\log\log\log N)$ as $N\to\infty$ is not in $O(\polylog(N))$.
\end{theorem}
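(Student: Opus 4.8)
The plan is to build $F_N$ by recursively composing a single \emph{slippery} relation --- a simplification of the construction of \cite{LLPY23} --- and then to separate a one-round $\QMA$ upper bound from a bounded-round $\QCMA$ lower bound by the strategy of \cite{NN22,LLPY23}. Fix a relation $R$ that is slippery in the sense introduced in this paper, with the Yamakawa--Zhandry relation of \cite{YZ22} as the motivating example, and compose it to depth $d=d(N)$: an input to $F_N$ is a well-formed oracle $\O$ that encodes a level-$1$ instance of $R$; a valid \emph{key} for that instance --- which we take to be the uniform superposition over \emph{all} $R$-witnesses of the instance, not a single witness --- unlocks the classical name of a level-$2$ instance of $R$; its key unlocks a level-$3$ instance; and so on, with the value of $F_N$ stored behind the depth-$d$ key. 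The parameters are chosen so that $\O$ has $\poly(\log N)$ bits, so that no $\poly(\log N)$-bit string can even describe one intermediate key state, and yet each key can be \emph{checked} against its instance by a single oracle-supplied projection. On a NO instance no consistent chain of instances terminates in a stored $1$, so YES and NO inputs agree except deep inside the nested structure.

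For the upper bound, the one-round $\QMA$ protocol is: the witness consists of the $d$ key states, one per level, together with the classical names of the instances they should unlock; the verifier makes a single round of $\poly(\log N)$ parallel queries --- using the oracle's projections to check, in parallel, that each purported key is a genuine $R$-witness-superposition for the named instance and that each key unlocks the next named instance --- and accepts iff every check passes and the revealed bit is $1$. Completeness is immediate from the honest witness; soundness holds because on a NO input no chain of instances ends in a $1$, so for every witness the final check fails with overwhelming probability. This gives $\QMA^{1}(F_N)=\poly(\log N)$, the ``efficiently computable in 1-round query $\QMA$'' half of the statement.

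The $\QCMA$ lower bound is the crux, and I would split it in two. Step one is a coverage/restriction argument: a $\QCMA^{r}$ protocol uses a $\poly(\log N)$-bit classical witness, so for each of the $2^{\poly(\log N)}$ possible witness strings $w$ the verifier run with $w$ is a fixed \emph{witnessless} $r$-round, $\poly(\log N)$-query quantum algorithm $A_w$ that, by soundness, rejects every NO input with probability $\ge 2/3$ and, by completeness, accepts with probability $\ge 2/3$ every YES input on which the honest prover uses $w$. Step two is a witnessless distinguishing lower bound: for $r=o(d)$, no $r$-round, $\poly(\log N)$-query quantum algorithm distinguishes the hard YES and NO input distributions with advantage better than $2^{-p}$, where $p$ is the witness length. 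Combining the two steps, the $2^{\poly(\log N)}$ accepting events of the $A_w$ would have to cover a constant fraction of the YES mass while each is negligible --- a contradiction --- so $\QCMA^{r}(F_N)$ is not $\poly(\log N)$, and taking $d=\Theta(\log\log N/\log\log\log N)$ yields the stated growth rate.

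The heart of step two, and the main obstacle, is the witnessless bounded-round lower bound, for which slipperiness is tailor-made: it should say that one round of parallel superposition queries to a fresh instance of $R$ cannot produce anything that behaves like a valid $R$-witness --- not approximately, and not even in the presence of witnesses for other, unrelated instances the algorithm has guessed --- so an adversary can descend at most one level of the composition per round. The technical work is (i) to formulate slipperiness so that it is robust to the ``advice by restriction'' obtained from fixing $w$ and to arbitrary entanglement across the query registers, (ii) to show that a suitably scaled Yamakawa--Zhandry relation is slippery, and (iii) to prove that slipperiness \emph{composes} --- a level-$(i+1)$ gadget stays slippery when its instance is determined by level-$i$ keys --- so that an induction on the number of rounds shows the algorithm's state has negligible overlap with the $F_N$-dependent part of $\O$ after $r<d$ rounds. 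The only quantitative point left is that each per-level gadget can be taken large enough for single-round hardness while the total description stays $\poly(\log N)$; optimizing this trade-off fixes the depth at $d=\Theta(\log\log N/\log\log\log N)$.
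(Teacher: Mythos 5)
Your construction is fundamentally different from the paper's, and the differences matter. The paper does \emph{not} use any recursive or nested composition: the function $F_N$ is built from a single Yamakawa--Zhandry instance. The oracle is $O[f,E](x,u)$ which returns $1$ iff $(x,u)\in R_f$ and $x\notin E$; YES inputs are $O[f,\emptyset]$ and NO inputs are $O[f,E]$ with $|E|\geq(2/3)2^n$. There is no chain of instances, no keys unlocking next-level names, and no depth parameter $d$. The QMA witness is the YZ advice state $\ket{z_f}$ by itself, and the verifier makes exactly one query (sample random $x$, use $\ket{z_f}$ to produce $u$, query $(x,u)$). Your proposed verifier, which receives and checks a stack of $d$ purported ``key states,'' is a different protocol; in particular, it implicitly asks the classical oracle to verify closeness to a superposition state, which a classical oracle cannot do directly, whereas the paper's verification happens entirely by producing classical $(x,u)$ and checking membership via the oracle.

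The round bound $r=o(\log\log N/\log\log\log N)$ does not come from a composition depth. It comes from a compounding blowup in a single iterative argument: fix the best witness $w$, look at the distribution $\mu$ over consistent $f$, and then repeatedly (i) densify $\mu$ via a GPW/CDG-style lemma, (ii) apply a hybrid argument (\lem{erasure}) to show one round of nonadaptive queries cannot detect erasures of low-probability oracle positions, (iii) use slipperiness (\lem{slippery}/\cor{slip-mu}) to bound the number of bits that must be newly fixed, and (iv) replace the first query round with a fake oracle, producing an algorithm with one fewer round that agrees on a still-nonempty set of $f$'s. The quantity $\RU(\mu_\ell)$ grows roughly as $\log\RU(\mu_{\ell+1})\leq(3+\log n)^2\log\RU(\mu_\ell)$, and the iteration must terminate before $\RU(\mu_r)$ reaches $\Omega(2^n)$, which caps $r$ at $o(\log n/\log\log n)$; translating $n=\Theta(\sqrt{\log N})$ yields the stated bound. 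This is a bookkeeping bound on entropy blowup, not a ``one level per round'' depth barrier.

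Two concrete gaps in your plan. First, your step~(iii), ``slipperiness composes,'' is not proven anywhere in the paper and is not needed there; it would be a substantial new lemma, and it is not clear it holds, since the distribution over a level-$(i+1)$ instance after conditioning on the transcript of levels $\leq i$ is exactly the kind of conditioning under which slipperiness as defined (\defn{slippery}) may degrade. Second, your step two asserts an advantage bound of $2^{-p}$ with $p$ the witness length; the paper never proves or uses such a bound. Instead it produces a \emph{single} function $\hat{f}$ and erasure set $E$ for which $\|Q(O[\hat{f},E])-Q(O[\hat{f},\emptyset])\|_2\leq 1/10$, which already contradicts soundness and completeness. If you want to pursue the composition route, you would need to both formulate and prove a compositional slipperiness lemma and carry out the hybrid argument in a way that survives conditioning on quantum query transcripts at each level, neither of which is sketched here.
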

We shall formally define the query versions of QMA and QCMA, and the $r$-round QCMA query complexity $\mathrm{QCMA}^r$ later.

Our construction for the query complexity separation is a somewhat simplified version of the construction in \cite{LLPY23}, which is based on the Yamakawa-Zhandry problem. \cite{YZ22} and \cite{Liu22} showed that there exists a relational problem $R_f$, indexed by functions $f:[n]\times\B^m\to \B$, for $m=\Theta(n)$, such that given oracle access to a quantum advice $\ket{z_f}$, a quantum algorithm on any input $x \in \B^n$, and on average over $f$, can find a $u$ such that $(x,u) \in R_f$\footnote{The Yamakawa-Zhandry relation is a $\mathsf{TFNP}$ relation, which means that the $u$-s are of $\poly(n)$ length, and a $u$ such that $(x,u) \in R_f$ exists for every $x$.}. On the other hand, no quantum algorithm can find such an $u$ for most $x$, when given only a classical advice $z_f$, and classical query access to $f$. Using this relation $R_f$, for a subset $E \subseteq \B^n$, we construct the following oracle:
\[ O[f, E](x,u) = \begin{cases}
            1 & \text{ if } (x,u) \in R_{f} \land x \notin E \\ 
            0 & \text{ otherwise}.
        \end{cases} \]
The 1-instances of the problem $F_N$ that will separate $\QMA$ and $\QCMA$ in the query complexity model will be $O[f,\emptyset]$, and the 0-instances will be $O[f,E]$ for $|E| \geq \frac{2}{3}\cdot 2^n$, for a large subset of all functions $f$.
This is essentially the same construction that is used in \cite{LLPY23}, except they also use an additional oracle $G$ for a random function from $\B^n$ to $\B^n$, which $O$ also depends on.

Note that the query complexity lower bound we obtain for QCMA is of a different nature than the one obtained in \cite{LLPY23}:
we need to lower bound (bounded-round) quantum query algorithms
instead of only classical query algorithms, and we focus on the
worst-case rather than average-case setting. In order to get an oracle separation for Turing machines from a separation in query complexity, one needs to use a diagonalization argument;
because our result is set up a bit differently than in previous work,
we reprove the diagonalization argument for our setting in
\app{diag}.


Finally, we emphasize that the bounded adaptivity limitation of our result is because we allow the full power of classical proofs and also quantum queries. If one were to drop the power of classical proofs
(resulting in the class $\BQP$), or if one were to drop the power
of quantum queries (resulting in, essentially, $\MA$),
it would follow from \cite{LLPY23} that close variants of
$F_N$ cannot be solved
even without the bounded-round restriction. We conjecture
their lower bounds apply to $F_N$ as well.


\subsection{Our techniques}
We briefly describe the techniques used to obtain the query complexity result. We start by observing that the oracle $O[f,\emptyset]$ is essentially just a verification oracle for the Yamakawa-Zhandry relation. Therefore, there is a quantum witness and a quantum algorithm that can distinguish $O[f,\emptyset]$ and $O[f,E]$ by using this witness, with only one query, with probability $1-2^{-\Omega(n)}$ over $f$. The witness for the yes instance $O[f,\emptyset]$ is simply the quantum advice for the Yamakawa-Zhandry problem, which finds a $u$ for any $x$ with probability $1-2^{-\Omega(n)}$ over $f$. The quantum algorithm finds a $u$ for a random $x$ using the witness, and queries the oracle. Since the no instances return 0 on any $(x,u)$ for most $x$, this algorithm can distinguish $O[f,\emptyset]$ and $O[f,E]$ for $1-2^{-\Omega(n)}$ fraction of the $f$-s.

We now consider the uniform distribution over these good $f$-s, which has $\Omega(2^n)$ min-entropy. If there was a classical witness function depending on $f$, of size $k$, that made a quantum algorithm accept $O[f,\emptyset]$ for these $f$-s, then there would exist a fixed witness string $w$ that would make $O[f,\emptyset]$ accept for $2^{-k}$ fraction of $f$-s. The quantum algorithm depends on the witness, but if we fix the witness string $w$, the algorithm is fixed, and we can then ignore the dependence of the algorithm on the witness.

We now attempt to remove rounds of the quantum query algorithm,
starting with the first round, while keeping the behavior
of the algorithm the same on as many oracles as possible.
Every time we remove a round, we restrict our attention
to a smaller set of oracles, all of which are consistent
with a growing partial assignment we assume is given to us.
At the end, the quantum algorithm will have no rounds left,
and hence will make no queries; we want the set of oracles $O[f,\emptyset]$
on which the behavior is preserved to be non-empty, because then
we can conclude that the algorithm cannot distinguish
$O[f,\emptyset]$ and $O[f,E]$ for some large erased set $E$
(since it now makes no queries).

To remove the first round of the query algorithm, we
start by considering the
the uniform distribution over the $2^{-k}$ fraction of good $f$-s such that $O[f,\emptyset]$ is accepted by $w$.
This distribution has $\Omega(2^n) - k$ min-entropy, and therefore, by a result of \cite{GPW17, CDG18}, it can be written as a convex combination of finitely many $(l, 1-\delta)$-dense distributions, for some small $l$ and $\delta$. $(l,1-\delta)$-dense distributions are a concept that was first introduced in the context of communication complexity; an $(l, 1-\delta)$-dense distribution over $N$ coordinates in which $l$ coordinates are fixed, and the rest of the coordinates have high min-entropy in every subset. (Here we are using the same terminology from \cite{GPW17, CDG18} for dense distributions; the terminology we use in our actual proof will be slightly different --- see \sec{QCMA-lems}.) We restrict our attention to such a distribution,
and try to preserve the behavior of the quantum algorithm only
within a subset of its support.

Some coordinates are fixed in the $(l, 1-\delta)$ distribution, which make the probability over this distribution  of the event $(x,u) \in R_f$ non-negligible, for some $(x,u)$ pairs. The quantum algorithm can potentially learn a lot about $f$ by querying the oracle $O[f,\emptyset]$ for these pairs. Therefore, we shall fix the coordinates of $f$ that are fixed by $(x,u)$ being in $R_f$. Here is where we use the fact that the Yamakawa-Zhandry relation is what we shall call \emph{slippery}. This essentially means that given a small number of fixed coordinates for $f$, the number $(x,u)$ pairs that have non-negligible probability is not too high, and the number of extra coordinates fixed by these $(x,u)$ pairs being in $R_f$ is also not too high. The Yamakawa-Zhandry relation being slippery essentially follows from it using a code that has good list recoverability properties. (The Hidden Matching relation, or its parallel repetition, are not slippery by this definition, and so our construction does not work with these.)

Using the slippery property, we can increase the size of the partial
assignment by not too much, and via a hybrid-like argument \cite{BBBV97},
we can ensure that the first round of the
quantum algorithm does not learn much
from queries outside this partial assignment. We then
restrict our attention to oracles consistent with this partial
assignment; on those, we can simulate the first round of the algorithm
without making real queries (we simply use the known partial assignment
and guess ``0'' on the rest of the oracle positions, which
are highly unlikely to be 1). This way, we get a quantum
algorithm with one fewer round, which mimics the original algorithm
on a small (but not too small) set of oracles.

Continuing this way, we eliminate all rounds of the algorithm
while still maintaining a non-empty set of oracles on which
the behavior is preserved. Each such oracle can be ``erased'',
turning a $1$-input into a $0$-input, so we only need the final
$0$-round algorithm to preserve the behavior of the original
algorithm on at least one input. Using this technique,
we can handle up to $o(\log n/\log\log n)$ rounds of $O(\poly n)$
non-adaptive quantum queries each.


\subsection{Discussion and further work}
We expect our techniques for the $\QMA$ vs $\QCMA$ separation may also work for a $\BQPq$ vs $\BQPp$ separation with boundedly adaptive oracle queries, using the same problem that is described in \cite{LLPY23}. Their oracle in the query complexity setting is given by a random function $G$, which the $\BQP$ algorithm has to compute given oracle access to
\[ O[f,G](x,u) = \begin{cases} G(x) & \text{ if } (x,u) \in R'_f \\ \bot & \text{ otherwise,} \end{cases} \]
and a quantum or classical advice. Here $R'_f$ is a modified $1$-out-of-$n$ version of the Yamakawa-Zhandry problem, which has better completeness properties, but is similar to the original problem otherwise. Clearly this problem can be solved in $\BQPq$ by using the quantum advice for the Yamakawa-Zhandry problem. It cannot be solved on input $x$ with any classical advice and with access to an oracle that outputs $\bot$ for every $(x,u)$. In order to show a $\BQPp$ lower bound for this problem, one needs that there exist many $x$-s such that a quantum algorithm with classical advice cannot distinguish $O[f,G]$ from a version of $O[f,G]$ that is erased on those $x$-s. Since $O[f,G]$ essentially serves as a verification oracle for $R'_f$, we expect that when the quantum algorithm has bounded rounds, a proof very similar to our $\QCMA$ lower bound will work.

The final goal is, of course, to be able to show both these results without a bound on the number of rounds of oracle queries the quantum algorithm makes. As mentioned earlier, we fail to do this because the slipperiness parameters of the relation we picked are not good enough, and our methods would work to separate $\QMA$ and $\QCMA$ with an analogous problem definition where the Yamakawa-Zhandry relation is replaced by a different relation $R_f$ that has the appropriate slipperiness property.

We now expand more on the required strong slipperiness property. Let $R_f$ be a family of $\mathsf{TFNP}$ relations on $\B^n \times \B^m$ indexed by $f \in \B^N$, where $m = \poly(n)$ and $N=\Omega(2^n)$. We further assume $R_f$ satisfies the property that if $(x,u) \in R_f$, then there is a polynomial-sized partial assignment $p$ for $f$ which certifies this, i.e., $(x,u) \in R_f$ $\forall f \supseteq p$. Let $\clP \subseteq \{0,1,*\}^N$ denote the set of polynomial-sized partial assignments for $f$. We define the extended version $\tR$ of the family of relations $R_f$ as follows:
\[ \tR = \{(p,x,u) : p \text{ is the minimal partial assignment s.t. } (x,u) \in R_f \, \forall f\supseteq p\}.\]
Since $p$ is polynomial-sized, if we consider the uniform distribution over $\B^N$, $\Pr[p \subseteq f]$ is exponentially small. Now consider a partial assignment $q$ for $f$ with size at most $s(n)$; we fix the bits in $q$ and generate the other bits of $f$ uniformly at random, which can make the probability of some other partial assignments $p$ non-negligible. The slipperiness property is concerned with the total support of all partial assignments $p$ such that $\Pr[p \subseteq f| q \subseteq f]$ is non-negligible, and $(p, x, u) \in \tR$. We say $\tR$ is $(\eta, s(n), t(n))$-slippery if for all $s(n)$-sized $q$, the total support of all $p$-s such that $\Pr[p \subseteq f| q \subseteq f] \geq \eta$ and $(p,x,u) \in \tR$ is at most $t(n)$. See \defn{slippery} for a more formal definition.

Our techniques show that the following conjecture implies an oracle separation between $\QMA$ and $\QCMA$.
\begin{conjecture}\label{conj:slip-sep}
There exists a family of $\mathsf{TFNP}$ relations $R_f$ such that
\begin{enumerate}
\item There exists a polynomial-time algorithm $\clA$, and for each $f$, a $\poly(n)$-sized quantum state $\ket{z_f}$ such that, given access to $x$ and $\ket{z_f}$, $\clA$ can find $u$ such that $(x,u) \in R_f$, with probability at least $1-2^{-\Omega(n)}$ over uniform $x, f$.
\item There exists a function $s(n) = 2^{o(n)}$ such that for all polynomial functions $p(n)$, the extended relation $\tR$ is $\left(1/p(n), s(n), t(n)\right)$-slippery for some $t(n)$ such that $\log(t(n)) = o(\log(s(n)))$.
\end{enumerate}
\end{conjecture}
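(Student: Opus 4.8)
The natural plan is to instantiate the Yamakawa--Zhandry template with a code chosen for its list-recovery parameters, rather than for list decoding alone. Identify $f\in\B^N$ with a function $H\colon[n]\times\B^{m_0}\to\B$ for $m_0=\Theta(n)$ (so $N=n\cdot 2^{m_0}=\Omega(2^n)$), fix a linear code $\mathcal{C}\subseteq\Sigma^n$ over $\Sigma=\B^{m_0}$ with enough codewords and good distance, and set $(x,u)\in R_f$, for $x\in\B^n$ and $u=(u_1,\dots,u_n)\in\Sigma^n$, to hold when $u\in\mathcal{C}$ and $H(i,u_i)=x_i$ for all $i\in[n]$ (made total following \cite{YZ22,Liu22}). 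Condition~1 of \conj{slip-sep} should then transfer from \cite{YZ22,Liu22} for any $\mathcal{C}$ satisfying the structural hypotheses used there: with the quantum advice $\ket{z_f}$ built in those works and access to $x$, a polynomial-time $\clA$ outputs a valid $u$ with probability $1-2^{-\Omega(n)}$ over uniform $(x,f)$. So condition~1 is, modulo reusing their completeness analysis, not the binding constraint.

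The substance is condition~2, which I would obtain by unwinding $\tR$ for this construction and matching it to list recovery. If $(x,u)\in R_f$ then the minimal certifying assignment is $p(x,u)=\{\,((i,u_i),x_i):i\in[n]\,\}$, so $|p|=n$ and $\supp(p)=\{(i,u_i)\}_{i\in[n]}$, and --- crucially --- the codeword condition on $u$ adds nothing to $p$. Fix an $s(n)$-sized $q$, let $T$ be its support and $T_i=\{v:(i,v)\in T\}$, so $\sum_i|T_i|\le s(n)$. For $(p,x,u)\in\tR$ we have $\Pr[p\subseteq f\mid q\subseteq f]=2^{-|\{i\,:\,(i,u_i)\notin T\}|}$ when $p$ agrees with $q$ on $T$ and $0$ otherwise, so $\Pr[p\subseteq f\mid q\subseteq f]\ge 1/p(n)$ forces all but $O(\log n)$ of the $u_i$ to lie in $\{v\in T_i:q(i,v)=x_i\}\subseteq T_i$; since $u\in\mathcal{C}$, the surviving $u$ are precisely the output of $O(\log n)$-error list recovery of $\mathcal{C}$ with input lists $(T_i)_i$ of total size $s(n)$. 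Writing $L(S)$ for a bound on that output list size at total input size $S$, each surviving $u$ pins $x$ outside its $O(\log n)$ error coordinates, so there are at most $\poly(n)\cdot L(s(n))$ dangerous pairs, each with support of size $n$; hence $\tR$ is $(1/p(n),s(n),t(n))$-slippery with $t(n)=\poly(n)\cdot L(s(n))$.

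This reduces \conj{slip-sep} to a coding-theoretic requirement: a linear code of length $n$ over an alphabet of size $2^{\Theta(n)}$, with enough codewords and good distance (and good list decoding, if \cite{YZ22,Liu22}'s completeness genuinely needs it), whose $O(\log n)$-error list-recovery output size obeys $\log L(S)=o(\log S)$ for all $S$ up to some $s(n)=2^{o(n)}$ --- which forces $s(n)$ superpolynomial so the $\poly(n)$ overhead fits the $o(\log s(n))$ budget. I expect this to be the main obstacle, and possibly a genuine barrier: every explicit family with a proven list-recovery guarantee (folded Reed--Solomon, Parvaresh--Vardy, multiplicity codes, algebraic-geometry codes, and recent refinements) has output list size polynomially large in the input list size, so $\log L(S)=\Theta(\log S)$, exactly the regime \conj{slip-sep} rules out; the near-constant list sizes we need are presently attainable only by non-explicit random codes, and the extra demand that they also be list decodable and algorithmically usable in this large-alphabet, tiny-list regime only compounds the difficulty. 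Failing a coding-theoretic solution, one could instead seek a more rigid combinatorial or cryptographic gadget in which ``$u$ consistent with $q$'' has only $s(n)^{o(1)}$ solutions while condition~1 survives --- but proving both conditions for any such alternative is exactly the open problem that \cite{LLPY23} and the present paper leave behind.
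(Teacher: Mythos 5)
This statement is a conjecture that the paper explicitly leaves open --- it is not proved anywhere in the text, and the paper itself notes that the Yamakawa--Zhandry relation does not appear to satisfy it (the paper's \lem{slippery} only gives $t(n)$ \emph{larger} than $s(n)$, the opposite of what condition~2 demands). You correctly treat it as open rather than claiming a proof: your unwinding of condition~2 into a list-recovery statement for the code $\mathcal{C}$ is exactly the computation in the paper's proof of \lem{slippery} (lists $T_i$ from $\supp(q)$, list-recovery bound on the surviving codewords, $n^{O(1)}$ choices of $x$ per codeword, support $n$ per certificate), and your diagnosis of the remaining barrier --- that known explicit list-recoverable codes have $\log L(S)=\Theta(\log S)$ rather than $o(\log S)$ --- is a correct and somewhat sharper articulation of why the paper's parameters from \cor{code-par} fall short. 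Since neither you nor the paper proves the conjecture, there is nothing further to compare.
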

Assuming \conj{slip-sep} is true, the oracle function separating $\QMA$ and $\QCMA$ would be distinguishing $O[f,\emptyset]$ and $O[f,E]$, for $|E| \geq \frac{2}{3}\cdot 2^n$, which we have defined earlier, using a relation $R_f$ that satisfies the conjecture. (The Yamakawa-Zhandry relation does not seem to satisfy the conjecture; we can only prove it is $(\eta, s(n), t(n))$-slippery, with $t(n)$ bigger than $s(n)$.)

We further note that any family of relations $R_f$ that satisfies \conj{slip-sep} must give an exponential separation between quantum and randomized one-way communication complexity, with the communication setting being that Alice gets input $f$, Bob gets input $x$, and Bob has to output $u$ such that $(x,u)\in R_f$.\footnote{Strictly speaking, condition 1 of the conjecture only implies that there exists a one-way communication protocol, in which Alice sends the state $\ket{z_f}$, which works on average over $x$ and $f$, whereas we usually require worst-case success in communication complexity. However, we can restrict to the set of $x$ and $f$ for which the algorithm $\clA$ works, in order to get the communication problem.} This is because, if there was a polynomial-sized classical message $w_f$ that Alice could send to Bob in the communication setting, then $w_f$ could also serve as a QCMA proof. Therefore, it seems that the slipperiness condition could also be used for lower-bounding one-way randomized communication complexity (although weaker slipperiness parameters than in the conjecture would also suffice for this).


\section{Preliminaries}

\subsection{QMA and QCMA in query complexity}

In this section, we review the formal definitions of QMA, QCMA,
computationally-efficient QMA, and bounded-round QCMA in the context
of query complexity.

\begin{definition}[Bounded-round quantum query algorithm]
For $r,T,n\in\bN$, give the following definition of a quantum query
algorithm $Q$ acting on $n$ bits, using $r$ rounds, with $T$ queries
in each round. The algorithm will be a
tuple of $r+1$ unitary matrices, $Q=(U_0,U_1,\dots,U_r)$.
These unitary matrices will each act on $T$ ``query-input''
registers of dimension $n$, $T$ ``query-output'' registers of
dimension $2$, an ``output'' register of dimension $2$,
and a work register of arbitrary dimension.

For each $x\in\B^n$,
let $U^x$ be the oracle unitary, which acts on the query-input
and query-output registers by mapping
\[\ket{i_1}\ket{b_1}\ket{i_2}\ket{b_2}\dots\ket{i_T}\ket{b_T}
\to \ket{i_1}\ket{b_1\oplus x_{i_1}}\ket{i_2}\ket{b_2
    \oplus x_{i_2}}\dots\ket{i_T}\ket{b_T\oplus x_{i_T}}\]
for all $i_1,\dots,i_T\in [n]$ and all $b_1,\dots,b_T\in\B$.
We extend $U^x$ to other registers via a Kronecker product with
identity, so that $U^x$ ignores the other registers.

The action of the algorithm $Q$ on input $x\in\B^n$, denoted by
the Bernoulli random variable $Q(x)$, will be the result of
measuring the output register of the state
\[U_rU^xU_{r-1}U^x\dots U^xU_1U^xU_0\ket{\psi_{init}},\]
where $\ket{\psi_{init}}$ is a fixed initial state.
\end{definition}

We will use the term ``$T$-query quantum algorithm''
without referring to the number of rounds to indicate $T$
rounds with $1$ query in each.

\begin{definition}[Query algorithm with witness]
Let $Q$ be a $r$-query quantum algorithm on $n$ bits
with $T$ queries in each round.
For any quantum state $\ket{\phi}$ and any $x\in\B^n$,
let $Q(x,\ket{\phi})$ be the random variable corresponding to the
measured output register after the algorithm terminates, assuming the
initial state contained $\ket{\phi}$ in the work register
(with ancilla padding) instead of being $\ket{\psi_{init}}$.
That is, $Q(x,\ket{\phi})$ is a Bernoulli
random variable corresponding to the measurement outcome of the output
register of the final state
\[U_rU^xU_{r-1}U^x\dots U_1U^xU_0\ket{\phi}\ket{pad}.\]
\end{definition}

\begin{definition}[Query QMA and QCMA]
Let $f$ be a possibly partial Boolean function on $n$ bits,
and let $Q$ be a quantum query algorithm on $n$ bits with $T$ total
queries. We say that $Q$ is a QMA algorithm for $f$ with witness
size $k$ if the following holds:
\begin{enumerate}
\item (Soundness.) For every $x\in f^{-1}(0)$ and every $k$-qubit
state $\ket{\phi}$, we have $\Pr[Q(x,\ket{\phi})=1]\le \epsilon$.
\item (Completeness.) For every $x\in f^{-1}(1)$, there exists
a $k$-qubit state $\ket{\phi}$ such that
$\Pr[Q(x,\ket{\phi})=1]\ge 1-\delta$.
\end{enumerate}
Here, $\epsilon$ and $\delta$ govern the soundness and completeness of
$Q$; by default, we take them both to be $1/3$.
We denote the QMA query complexity of $f$ by
$\mathrm{QMA}_{\epsilon,\delta}(f)$, which is the minimum possible value of
$T+k$ over any QMA algorithm for $f$ with the specified soundness
and completeness.

We say that $Q$ is a QCMA algorithm for $f$ if the same conditions
hold, except with the witness state $\ket{\phi}$ quantifying over only
classical $k$-bit strings in both the soundness and completeness
conditions. We define $\mathrm{QCMA}_{\epsilon,\delta}(f)$ analogously to
$\mathrm{QMA}_{\epsilon,\delta}(f)$, and we omit the subscripts when they are
both $1/3$.
\end{definition}

\begin{definition}[Bounded round query QMA and QCMA]
We define $r$-round QMA and QCMA in exactly the same way as the above definition, except the query algorithms are required to have at most $r$ rounds. We use $\mathrm{QMA}^r_{\eps,\delta}(f)$ and $\mathrm{QCMA}^r_{\eps,\delta}(f)$ to denote the $r$-round QMA and QCMA query complexities of $f$ respectively.
\end{definition}

\begin{definition}[Function family]
A \emph{function family} is an indexed set $F=\{f_n\}_{n\in I}$
where $I\subseteq\bN$ is an infinite set and where
each $f_n$ is a partial Boolean function
$f_n\colon\Dom(f_n)\to\B$ with $\Dom(f_n)\subseteq\B^n$.
\end{definition}

\begin{definition}[Efficiently computable QMA]\label{def:efficientQMA}
Let $F=\{f_n\}_{n\in I}$ be a function family.
We say that $F$ is in \emph{efficiently computable query $\QMA$}
if there is a polynomial-time Turing machine which takes in
the binary encoding $\langle n\rangle$ of a number $n\in I$ and
outputs a QMA verifier $Q$ by explicitly writing out the
unitaries of $Q$ as quantum circuits (with a fixed universal
gate set). The verifier $Q$ must be sound and complete for $f_n$. Efficiently computable bounded-round QMA is defined analogously.

In other words, $\mathrm{QMA}(f_n)$ must be $O(\polylog(n))$,
and moreover, the different algorithms for $f_n$ must
be uniformly generated by a single polynomial-time Turing machine.
\end{definition}


With these definitions, we show in \app{diag} that Theorem \thm{QuerySep} implies Theorem \thm{TMSep}.



\subsection{Error-correcting codes}
A Reed-Solomon error-correcting code $\RS_{q, \gamma, k}$ over $\bF_q$, with degree parameter $0 < k < q-1$ and generator $\gamma \in \bF^*_q$, is defined as
\[ \RS_{q, \gamma, k} = \{(f(\gamma), \ldots f(\gamma^q)): f \in \bF_q[x]_{\deg \leq k}\},\]
where $\bF_q[x]_{\deg \leq k}$ is the set of polynomials over $\bF_q$ of degree at most $k$.

Let $q-1=mn$, for some integers $m$ and $n$. The $m$-folded version $\RS^{(m)}_{q, \gamma, k}$ of $\RS_{q, \gamma, k}$ is a mapping of the code to the larger alphabet $\bF_q^m$ as follows:
\[ \RS^{(m)}_{q, \gamma, k} = \{ ((x_1,\ldots, x_m), \ldots, (x_{q-m},\ldots, x_q)): (x_1,\ldots, x_q) \in \RS_{q, \gamma, k}\}. \]
Note that the alphabet of $\RS^{(m)}_{q,\gamma,k}$ is $\bF_q^m$.

\begin{definition}
We say that a code $C \subseteq \Sigma^n$ is combinatorially $(\zeta, \ell, L)$-list recoverable if for any subsets $S_i \subseteq \Sigma$ such that $|S_i| \leq \ell$, we have,
\[ \left|\{(x_1,\ldots, x_n) \in C: |\{i: x_i \in S_i\}| \geq (1-\zeta)n\}\right| \leq L.\]
\end{definition}

\begin{lemma}[\cite{Rud07, YZ22}]
For a prime power $q$ such that $mn=q-1$, any generator $\gamma \in \bF_q^*$, and degree $k < q-1$, $\RS^{(m)}_{q,\gamma,k}$ is $(\zeta, \ell, q^s)$-list recoverable for some $s \leq m$ if there exists an integer $r$ such that the following inequalities hold:
\begin{align}
(1-\zeta)n(m-s+1) & \geq \left(1+\frac{s}{r}\right)(mn\ell k^s)^{1/(s+1)} \label{eq:LR-1} \\
(r+s)\left(\frac{mn\ell}{k}\right)^{1/(s+1)} & < q. \label{eq:LR-2}
\end{align}
\end{lemma}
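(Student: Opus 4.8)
The plan is to prove this by the polynomial method of Guruswami--Rudra for list-decoding folded Reed--Solomon codes, in the combinatorial list-recovery form used in \cite{YZ22}. Identify a codeword of $\RS^{(m)}_{q,\gamma,k}$ with a polynomial $f\in\bF_q[X]$ of degree at most $k$, so that its $i$-th folded symbol is the tuple $(f(\gamma^{(i-1)m+1}),\dots,f(\gamma^{im}))\in\bF_q^m$. Fix the correlation parameter $s$ and the auxiliary integer $r$ from the hypothesis, and let $D$ be an interpolation degree to be chosen. The first step is a linear-algebra count: there is a nonzero $Q(X,Y_1,\dots,Y_s)\in\bF_q[X,Y_1,\dots,Y_s]$, of controlled shape (degree below $D$ in $X$, and bounded --- roughly total degree $r$ --- in the $Y_j$'s), that vanishes at every point $(\gamma^{(i-1)m+t},\alpha_t,\alpha_{t+1},\dots,\alpha_{t+s-1})$ arising from a length-$s$ window inside a tuple $(\alpha_1,\dots,\alpha_m)\in S_i$. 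The number of such constraints is at most $n(m-s+1)\ell\le mn\ell$, while the number of free coefficients of $Q$ is $D$ times the number of admissible $Y$-monomials; a nonzero $Q$ exists once the latter exceeds $mn\ell$, which is what couples $D$, $r$ and $s$ and, after the optimization below, yields \eq{LR-1} and \eq{LR-2}.

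The second step shows every list codeword is captured by $Q$. Given $f$ in the list, set $R_f(X):=Q(X,f(X),f(\gamma X),\dots,f(\gamma^{s-1}X))\in\bF_q[X]$, whose degree is bounded in terms of $D$, $r$ and $k$. For each block $i$ whose folded symbol lies in $S_i$, each of its $m-s+1$ length-$s$ windows gives a root of $R_f$ at the corresponding power of $\gamma$; since $f$ agrees on at least $(1-\zeta)n$ blocks, $R_f$ has at least $(1-\zeta)n(m-s+1)$ distinct roots. If $D$ is arranged so that this count exceeds $\deg R_f$ --- which is precisely inequality \eq{LR-1} --- then $R_f\equiv 0$.

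The third step turns ``$R_f\equiv 0$'' into a root count, and this is where $\gamma$ being a generator of $\bF_q^*$ enters: then $E(X):=X^{q-1}-\gamma$ is irreducible over $\bF_q$, so $\bF:=\bF_q[X]/(E(X))$ is a field, and since $X^q\equiv\gamma X$ and $\deg f\le k<q-1$ one has the Frobenius identity $f(\gamma^{j}X)\equiv f(X)^{q^{j}}$ in $\bF$. Hence $R_f\equiv 0$ forces $f(X)$, viewed in $\bF$, to be a root of the single-variable polynomial $T(Z):=Q(X,Z,Z^{q},\dots,Z^{q^{s-1}})\in\bF[Z]$. Inequality \eq{LR-2} keeps the relevant $Y$-degrees below $q$, which guarantees both that distinct $Y$-monomials of $Q$ become distinct powers of $Z$ (so $T\neq 0$) and that $\deg_Z T\le q^{s}$. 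A nonzero polynomial over the field $\bF$ has at most $\deg_Z T$ roots, and distinct admissible messages $f$ give distinct elements $f(X)\in\bF$ (again since $k<q-1$), so the list has size at most $q^{s}$; this is exactly $(\zeta,\ell,q^{s})$-list recovery.

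The main obstacle is the simultaneous balancing in the first two steps: $D$ must be large enough for a nonzero $Q$ to exist (monomial count versus the $mn\ell$ constraints) yet small enough that $(1-\zeta)n(m-s+1)$ beats $\deg R_f$, while $r$ must be flexible enough to help the former but small enough (below roughly $q\,(k/mn\ell)^{1/(s+1)}$, which is what \eq{LR-2} encodes) to keep $\deg_Z T\le q^s$ and $T\ne 0$. Determining the optimal way to split $D$ across the $Y$-degrees, and thereby extracting the precise $(1+s/r)$ factor and $1/(s+1)$-th powers in \eq{LR-1}--\eq{LR-2}, is the delicate part; one should also be careful with the folding/indexing conventions (length-$s$ windows inside length-$m$ blocks, hence $m-s+1$), and note that the hypothesis $k<q-1$ is exactly what makes reduction modulo $E(X)$ injective on admissible messages.
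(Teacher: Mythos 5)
The paper does not prove this lemma; it imports it as a black box from Rudra's thesis \cite{Rud07} (in the form used by \cite{YZ22}), so there is no in-paper proof to compare against. Your sketch correctly reconstructs the standard Guruswami--Rudra / Rudra's-thesis argument for folded Reed--Solomon list recovery: an interpolation/linear-algebra step producing $Q$; a roots-versus-degree count on $R_f(X)=Q(X,f(X),\dots,f(\gamma^{s-1}X))$ forcing $R_f\equiv 0$, which is where \eq{LR-1} enters; and the passage to the field $\bF_q[X]/(X^{q-1}-\gamma)$ (irreducible precisely because $\gamma$ generates $\bF_q^*$) together with the Frobenius identity $f(\gamma^j X)\equiv f(X)^{q^j}$, so that every list codeword is a root of $T(Z)=Q(X,Z,Z^q,\dots,Z^{q^{s-1}})$, with \eq{LR-2} ensuring $T\ne 0$ and $\deg_Z T\le q^s$ via base-$q$ distinctness of the exponents. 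The structure and the role of each hypothesis ($k<q-1$, $\gamma$ a generator, the $m-s+1$ windows per block) are all right. The one thing you have explicitly left undone is the parameter optimization in the interpolation step that produces the specific $(1+s/r)$ factor and the $1/(s+1)$-th powers; since those constants \emph{are} the quantitative content of \eq{LR-1}--\eq{LR-2}, your sketch identifies the correct proof but does not yet derive the stated inequalities --- to close it you would need to pin down the exact degree shape of $Q$ (weighted $(1,k,\dots,k)$-degree below some $D$, $Y$-degree controlled by $r$), solve the monomial-count-versus-constraint-count inequality for $D$, and substitute into $\deg R_f < (1-\zeta)n(m-s+1)$, exactly as in \cite{Rud07}.
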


\begin{corollary}\label{cor:code-par}
Let $m$ be $\Theta(n)$ integer such that $nm+1=q$ is a prime power. Let $k = \frac{5}{6}mn$ and let $c, d$ be constants. Then $\RS_{q,\gamma,k}$ is $(c\log n/n, 2^{(\log n)^d}, 2^{(\log n)^{d+1}})$-list recoverable.
\end{corollary}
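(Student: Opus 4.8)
The plan is to instantiate the list-recoverability lemma with the choice $s = 1$ and a suitable integer $r$, and check that inequalities \eqref{eq:LR-1} and \eqref{eq:LR-2} hold for the stated parameters. With $s=1$ the target list size becomes $q^s = q = nm+1 = \Theta(n^2)$, which is certainly at most $2^{(\log n)^{d+1}}$ for large $n$, so it suffices to verify the two inequalities with $s=1$, $\zeta = c\log n/n$, $\ell = 2^{(\log n)^d}$, $k = \tfrac{5}{6}mn$. Plugging $s=1$ in, the right-hand side of \eqref{eq:LR-1} is $(1+1/r)\sqrt{mn\ell k} = (1+1/r)\sqrt{\tfrac{5}{6}}\,mn\sqrt{\ell}$, while the left-hand side is $(1-\zeta)\,n\cdot m = (1-c\log n/n)\,mn$. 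Since $\sqrt{5/6} < 1$ is bounded away from $1$ by a constant, and $\sqrt{\ell} = 2^{(\log n)^d/2}$ grows subexponentially, we cannot simply compare these directly; the resolution is that we are free to take $r$ large but \emph{not} the dominant term — wait, that is wrong, so let me reconsider. The correct reading is that $\ell$ is applied per-coordinate and $s=1$ is too weak; instead the plan is to take $s$ growing slowly, e.g. $s = \lceil (\log n)^{d}\rceil$ or a constant multiple thereof, so that $(mn\ell k^s)^{1/(s+1)}$ is roughly $k \cdot (mn\ell/k)^{1/(s+1)}$, and the factor $(mn\ell/k)^{1/(s+1)}$ is brought down to $1+o(1)$ once $s+1 \gtrsim \log(mn\ell)/\log(mn\ell/k)$. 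Since $\log \ell = (\log n)^d$ and $mn = \Theta(n^2)$, we have $\log(mn\ell) = \Theta((\log n)^d)$, so $s = \Theta((\log n)^d)$ suffices to make that factor a constant close to $1$.

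Concretely, I would first fix $s$ to be the smallest integer with $s+1 \ge C(\log n)^d$ for a large constant $C$, so that $(mn\ell/k)^{1/(s+1)} \le 1 + o(1)$; note $s \le m$ holds since $m = \Theta(n)$ dwarfs $(\log n)^d$. Then the right-hand side of \eqref{eq:LR-1} is at most $(1 + s/r)(1+o(1))\,mn\cdot k^{s/(s+1)}/(mn)^{s/(s+1)} \cdot \ell^{1/(s+1)}$; here $k = \tfrac56 mn$, so $k^{s/(s+1)} = (\tfrac56)^{s/(s+1)}(mn)^{s/(s+1)}$, and $(\tfrac56)^{s/(s+1)} \to 5/6$, while $\ell^{1/(s+1)} = 2^{(\log n)^d/(s+1)} = O(1)$ by the choice of $s$. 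So the right-hand side is $\le (1+s/r)\cdot(\tfrac56 + o(1))\,mn(m-s+1)/m$ — re-deriving the $(m-s+1)$ factor, the left side is $(1-\zeta)n(m-s+1)$, which is $(1+o(1))\,n(m-s+1)$ since $\zeta = c\log n/n = o(1)$. Dividing both sides by $n(m-s+1)$, inequality \eqref{eq:LR-1} reduces to $1 - o(1) \ge (1 + s/r)(\tfrac56 + o(1))$, which holds for all large $n$ provided $s/r$ is small enough, e.g. $r \ge 100 s$. Finally, for \eqref{eq:LR-2}, with $r = \Theta(s) = \Theta((\log n)^d)$ and $(mn\ell/k)^{1/(s+1)} = O(1)$, the left-hand side is $(r+s)\cdot O(1) = O((\log n)^d)$, which is far below $q = \Theta(n^2)$. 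Hence both inequalities hold, giving $(\zeta,\ell,q^s)$-list recoverability with $q^s \le (nm+1)^{O((\log n)^d)} = 2^{O((\log n)^{d+1})}$, and absorbing constants into $d$ (or rather, noting the statement allows us to choose the exponent $d+1$ in the bound once $d$ in the input is fixed, up to adjusting $C$) completes the proof.

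Actually, to keep the bound exactly $2^{(\log n)^{d+1}}$ rather than $2^{O((\log n)^{d+1})}$, I would either restate with a slightly larger exponent, or choose $s$ so that $q^s \le 2^{(\log n)^{d+1}}$ holds on the nose: since $\log q = O(\log n)$, we need $s \le (\log n)^{d+1}/O(\log n) = \Omega((\log n)^d)$, which is compatible with the lower bound $s \gtrsim (\log n)^d$ forced above, so with the constants arranged appropriately both requirements on $s$ are simultaneously satisfiable. The main obstacle is this juggling of the parameter $s$: it must be large enough that $\ell^{1/(s+1)}$ and $(mn\ell/k)^{1/(s+1)}$ collapse to constants, yet small enough that $q^s$ stays within the target list-size bound and that $s \le m$. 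Since $\log \ell = (\log n)^d$ while $\log q = \Theta(\log n)$, there is a genuine window of valid $s$ (roughly $s \asymp (\log n)^d$), and the rest is the routine verification of \eqref{eq:LR-1}–\eqref{eq:LR-2} sketched above; I do not anticipate any conceptual difficulty beyond bookkeeping the $o(1)$ and constant factors carefully.
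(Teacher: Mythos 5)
Your approach---instantiating the list-recoverability lemma with $s = \Theta((\log n)^d)$ and $r = \Theta(s)$, then verifying \eqref{eq:LR-1}--\eqref{eq:LR-2}---is exactly what the paper does (its proof is only the assertion that the inequalities hold for YZ22-style parameters). You also correctly discard $s=1$ and correctly note that $s\le m$ is not a constraint here. So the strategy matches.

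The gap is your closing assertion that the two requirements on $s$ are ``simultaneously satisfiable.'' They are not, with the corollary's stated constants. After dividing \eqref{eq:LR-1} by $mn$ and using $k=\tfrac56 mn$, the condition becomes, up to $1\pm o(1)$ factors,
\[
\frac56\cdot 2^{(\log n)^d/(s+1)} < 1,
\]
which forces $s+1 > (\log n)^d/\log_2(6/5) \approx 3.8\,(\log n)^d$. Meanwhile $\log_2 q \approx 2\log_2 n$, so $q^s \le 2^{(\log n)^{d+1}}$ forces $s \lesssim (\log n)^d/2$. These two windows miss by roughly a factor of $8$, so there is no valid $s$; the list size one actually obtains is $q^s = 2^{\Theta((\log n)^{d+1})}$ with a constant strictly greater than $1$ in the exponent, not $2^{(\log n)^{d+1}}$ on the nose. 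This looks like an imprecision already present in the paper's terse proof (the constant should be made explicit or the bound restated as $2^{O((\log n)^{d+1})}$); it is harmless downstream, since \lem{slippery} only needs $2^{O((\log n)^{d+1})}$ and its conclusion $2^{(c+2)(\log n)^{d+1}}$ would merely become $2^{(c+C)(\log n)^{d+1}}$ for some larger constant $C$. But in your writeup you should either carry the constant explicitly and flag the discrepancy, or restate the target list bound, rather than asserting compatibility without checking.
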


This corollary is proved simply by checking that the equations \eqref{eq:LR-1}--\eqref{eq:LR-2} are satisfied with this choice of parameters. The choice of parameters is in fact the same as those as \cite{YZ22}. Therefore, the above code satisfies the other conditions required for the \cite{YZ22} quantum algorithm to succeed in evaluating the relation $R_{C,f}$ defined in the next section. 

\section{The Yamakawa-Zhandry Problem}

For a function $f: [n]\times\B^m \to \B$ and a linear code $C \subseteq \B^{nm}$, define the relation $R_{C,f} \subseteq \B^n\times\B^{nm}$
\[ R_{C,f} = \{ (x,u) = (x_1\ldots x_n, u_1\ldots u_n) : (u_1\ldots u_n \in C) \land (\forall i \, f(i,u_i)=x_i)\}.\]
We shall typically work with $m=\Theta(n)$. We shall usually work with a fixed code $C$, in which case we shall omit the subscript $C$ from $R_{C,f}$.

Let $\clP \subseteq \Ba^{n2^m}$ denote the set of polynomial-sized partial assignments for functions $f:[n]\times\B^m\to\B$. We define the extended version $\tR_C$ of $\{R_{C,f}\}_f$ over $\clP\times\B^n\times\B^{nm}$ as follows:
\[ \tR_C = \{(p,x,u) : p \text{ is the minimal partial assignment s.t. } (x,u) \in R_{C,f} \, \forall f\supseteq p\}.\]
In particular, $(p, x, u)$ is in $\tR_C$ when $p$ is the partial assignment $(f(i,u_i)=x_i)_i$, which is $n$ bits.

\begin{definition}\label{def:slippery}
Let $\tR_n$ be a sequence of relations on $\clP_n\times\B^n\times\B^{\poly(n)}$, where $\clP_n$ consists of fixed polynomial-sized partial assignments for $N=2^{\Omega(n)}$-bit strings, and $\poly(n)$ is some fixed polynomial. We say $\tR_n$ is $(\eta, s(n), t(n))$-slippery w.r.t. distribution $\mu$ on $f$ if for any partial assignment $q$
on $N$ bits with size at most $s(n)$, if we fix the bits of $q$ in
$f$ and generate the other bits of $f$ according to $\mu$ (conditioned on $q$), we
will have
\[\left|\bigcup_{\substack{(p,x,u)\in \tR_n,\\
\Pr_{f\sim \mu}[p\subseteq f|q\subseteq f]\ge \eta}} \supp(p)\right|\le t(n).\]
We omit mentioning the distribution $\mu$ explicitly if it is the uniform distribution.
\end{definition}

\begin{lemma}\label{lem:slippery}
When $C$ is a code with parameters from \cor{code-par}, then
for any constants $c,d$, $\tR_C$ is $(\frac{1}{n^c}, 2^{(\log n)^d}, 2^{(c+2)(\log n)^{d+1}})$-slippery.
\end{lemma}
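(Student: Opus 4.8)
The plan is to unwind the definition of slipperiness and reduce it to a list-recovery statement about the code $C$. Fix a partial assignment $q$ on the $n2^m$ bits describing $f$, with $|q|\le s(n)=2^{(\log n)^d}$. We must bound the total support of all minimal partial assignments $p$ with $(p,x,u)\in\tR_C$ and $\Pr_f[p\subseteq f\mid q\subseteq f]\ge\eta=1/n^c$. Recall that for such a triple $p$ is exactly the $n$-bit assignment $\{f(i,u_i)=x_i\}_{i\in[n]}$ for some codeword $u=u_1\dots u_n\in C$ and some $x\in\B^n$. The key observation is that for a fixed $u$, the $n$ coordinates $(i,u_i)$ of $f$ are distinct, so conditioned on $q$ the $n$ bits $f(i,u_i)$ are independent and each is either already fixed by $q$ (forced value) or uniform. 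If $j$ of the coordinates $(i,u_i)$ lie in $\supp(q)$, then $\Pr_f[p\subseteq f\mid q\subseteq f]$ is $2^{-(n-j)}$ if the forced values are consistent with $p$ and $0$ otherwise; the only way this is $\ge 1/n^c$ is to have $n-j\le c\log n$, i.e.\ all but at most $c\log n$ of the coordinates $(i,u_i)$ are fixed by $q$, and moreover once $u$ is fixed, the value of $x$ on those $j$ coordinates is forced, so there are at most $2^{c\log n}=n^c$ choices of $x$ compatible with each such $u$.

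Next I would translate ``$u\in C$ has all but $c\log n$ of its coordinates $(i,u_i)$ fixed by $q$'' into the list-recovery hypothesis. For each $i\in[n]$ let $S_i=\{v\in\B^m: (i,v)\in\supp(q)\}$; then $\sum_i|S_i|\le|q|\le s(n)=2^{(\log n)^d}$, so in particular each $|S_i|\le 2^{(\log n)^d}=\ell$. (One can either bound each $S_i$ individually by $\ell$ or, more carefully, note the total is $\le\ell$; either way the hypothesis of \cor{code-par} is met with $\ell=2^{(\log n)^d}$.) The condition that at most $c\log n$ coordinates of $u$ avoid their $S_i$ is exactly ``$|\{i: u_i\in S_i\}|\ge n - c\log n = (1-c\log n/n)\,n$''. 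By \cor{code-par}, the code $\RS_{q,\gamma,k}$ with the chosen parameters is $(c\log n/n,\,2^{(\log n)^d},\,2^{(\log n)^{d+1}})$-list recoverable, so the number of such codewords $u$ is at most $2^{(\log n)^{d+1}}$. (Here I am identifying the linear code $C\subseteq\B^{nm}$ used in the Yamakawa--Zhandry problem with the Reed--Solomon code of \cor{code-par}, as in \cite{YZ22}; if the folding introduces a constant-factor discrepancy between ``fraction of symbols'' over $\bF_q$ and ``fraction of bits'', I would absorb it into the constant $c$.)

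Finally I would assemble the count. Each surviving triple $(p,x,u)$ contributes $\supp(p)$, a set of at most $n$ coordinates of $f$. The number of surviving $u$ is $\le 2^{(\log n)^{d+1}}$; for each such $u$ the number of compatible $x$ is $\le n^c = 2^{c\log n}$ (and $p$ is then determined by $(x,u)$); and each $p$ has $|\supp(p)|=n\le 2^{\log n}$. Multiplying, the total support is at most
\[
2^{(\log n)^{d+1}}\cdot 2^{c\log n}\cdot 2^{\log n}
= 2^{(\log n)^{d+1}+(c+1)\log n}
\le 2^{(c+2)(\log n)^{d+1}},
\]
where the last inequality holds for all sufficiently large $n$ (and can be made to hold for all $n$ by restricting to the relevant index set). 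This is the claimed bound $t(n)=2^{(c+2)(\log n)^{d+1}}$, completing the proof.

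\emph{Expected main obstacle.} The only genuinely delicate point is the reduction to \cor{code-par}: making sure the ``fixed-by-$q$'' pattern on the coordinates $(i,u_i)$ corresponds correctly to the list-recovery sets $S_i$, and correctly handling the folded Reed--Solomon structure (the $m$-folded code has alphabet $\bF_q^m$ while $q$ fixes individual bits of $f$, which sits at the symbol-over-$\bF_q$ level). I expect this to be a matter of bookkeeping that can be hidden inside the constants $c,d$, rather than a real difficulty; everything else is the elementary independence/counting argument above.
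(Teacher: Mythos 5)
Your proof is correct and follows essentially the same route as the paper's: define $S_i$ from the partial assignment $q$, observe that $\Pr[p\subseteq f\mid q\subseteq f]\ge n^{-c}$ forces all but $c\log n$ coordinates $(i,u_i)$ of $u$ to lie in $\supp(q)$, invoke \cor{code-par} to bound the number of such codewords $u$ by $2^{(\log n)^{d+1}}$, multiply by the $\le 2^{c\log n}$ compatible $x$'s and the support size $n$ of each $p$, and simplify. The extra bookkeeping you flag (the independence/forcing argument for the probability bound, and the folded-RS alphabet versus bit-level partial assignments) is also treated at the same level of detail in the paper, so there is no meaningful difference.
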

\begin{proof}
Let $q$ be a partial assignment of size $2^{(\log n)^d}$. For each $i\in[n]$, let $S_i=\{v: (i,v) \text{ is fixed in } q\}$. Clearly for each $i$, $|S_i| \leq 2^{(\log n)^d}$. By \cor{code-par},
\[ C_q = \left|\{u_1\ldots u_n \in C: |\{i: u_i \in S_i\}| \geq n-c\log n\}\right| \leq 2^{(\log n)^{d+1}}.\]
Let us count the number of $(p,x,u)$ tuples that could be in $\tR_{C}$ conditioned on $q$, for which $u$ is in $C_q$. In fact we only need to count the number of $(x,u)$ pairs that could be in $R_{C,f}$, since $p$ is completely fixed by $x$ and $u$. Each $u$ has at most $c\log n$ many locations that are not fixed by $q$, and $x$ can take any value in those $c\log n$ locations, which means there are $2^{c\log n}$ many possible $x$-s for each $u$. Therefore, the number of $(x,u)$ pairs is $2^{(\log n)^{d+1}}\cdot 2^{c\log n}$. Consider the $(p,x,u)$ corresponding to each such $(x,u)$. Since $x$ has $c\log n$ many locations unfixed by $q$, and $p$ only fixes those locations, we have for each such $p$, $\Pr[p\subseteq f| q \subseteq f] \geq \frac{1}{n^c}$. In fact the $(p,x,u)$ tuples we have counted with $u\in C_q$ are the only ones that satisfy $(p, x, u) \in \tR_C$ conditioned on $q$, and $\Pr[p\subseteq f| q \subseteq f] \geq \frac{1}{n^c}$. Since the total support of each $p$ is $n$, we have,
\[ \left|\bigcup_{\substack{(p,x,u)\in \tR_n,\\
\Pr_{f}[p\subseteq f|q\subseteq f]\ge \frac{1}{n^c}}} \supp(p)\right|\le 2^{(\log n)^{d+1}}\cdot 2^{c\log n}\cdot n \le 2^{(c+2)(\log n)^{d+1}}.\]
\end{proof}

\begin{corollary}\label{cor:slip-mu}
If $\mu$ is a distribution such that for all partial assignments $p$ with $|p|=n$, we have $\mu[p] \leq \frac{k}{2^n}$ (where $\mu[p]$ is the probability mass of strings consistent with $p$), then $\tR_C$ from \lem{slippery} is also $(\frac{k}{n^c}, 2^{(\log n)^d}, 2^{(c+2)(\log n)^{d+1}})$-slippery w.r.t. $\mu$.
\end{corollary}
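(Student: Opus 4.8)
The plan is to rerun the proof of \lem{slippery} with $\mu$ in place of the uniform distribution, isolating the single place where the distribution actually enters. Everything code-theoretic is untouched: for a fixed $q$ of size $2^{(\log n)^d}$ with block-sets $S_i=\{v:(i,v)\text{ fixed in }q\}$, \cor{code-par} still bounds by $2^{(\log n)^{d+1}}$ the list $C_q$ of codewords agreeing with $q$ on all but $c\log n$ blocks, and there are still at most $2^{(\log n)^{d+1}}\cdot 2^{c\log n}$ pairs $(x,u)$ whose minimal certificate $p\in\tR_C$ is consistent with $q$ and leaves at most $c\log n$ coordinates of $f$ unfixed by $q$. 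So the whole task is to show that, with the larger threshold $\eta=k/n^c$, the set of $(p,x,u)\in\tR_C$ with $\Pr_{f\sim\mu}[p\subseteq f\mid q\subseteq f]\ge\eta$ is no larger than this ``good'' set; since each such $p$ has $|\supp(p)|=n$, the union of supports is then at most $2^{(\log n)^{d+1}}\cdot 2^{c\log n}\cdot n\le 2^{(c+2)(\log n)^{d+1}}$, which is the claimed bound, by the same arithmetic $(\log n)^{d+1}+(c+1)\log n\le(c+2)(\log n)^{d+1}$ as in \lem{slippery}.

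The first ingredient I would record is the elementary consequence of the hypothesis that the argument rests on: if $\mu[r]\le k/2^n$ for every partial assignment $r$ with $|r|=n$, then $\mu[r]\le k\cdot 2^{-|r|}$ for every $r$ with $|r|\le n$ — write $\{f:f\supseteq r\}$ as the disjoint union of the $2^{n-|r|}$ events $\{f:f\supseteq r'\}$ over the size-$n$ extensions $r'$ of $r$ that fix some fixed block of $n-|r|$ further coordinates, and sum. Now fix $(p,x,u)\in\tR_C$ consistent with $q$ and let $p'$ be the restriction of $p$ to coordinates outside $\Dom(q)$, so $|p'|$ equals the number of blocks $i$ with $u_i\notin S_i$ and $|p'|\le n$. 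Then $\Pr_{f\sim\mu}[p\subseteq f\mid q\subseteq f]=\mu[p\cup q]/\mu[q]\le\mu[p']/\mu[q]\le k\,2^{-|p'|}/\mu[q]$. If $u\notin C_q$ then $|p'|>c\log n$, hence $k\,2^{-|p'|}<k/n^c$, so it suffices to know that $\mu[q]>n^{c}2^{-|p'|}$ (in the worst case, $\mu[q]>n^{c}2^{-n}$) to conclude that this probability is $<k/n^c$ and that such a tuple is not captured, leaving exactly the ``good'' set counted above.

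The step I expect to be the main obstacle is precisely this last one: controlling $1/\mu[q]$ after conditioning on $q$. For a completely arbitrary $\mu$ obeying only $\mu[r]\le k/2^n$ on size-$n$ assignments, conditioning on a rare $q$ can concentrate $\mu$ and make $1/\mu[q]$ blow up, so the proof must use something about the distributions this corollary is applied to — the near-uniform distribution over the good $f$'s, and the dense distributions extracted from it in \sec{QCMA-lems} — namely that they remain spread out after conditioning on any partial assignment of size at most $s(n)=2^{(\log n)^d}$, which is exactly the range of $|q|$ permitted here. Granting that, everything else is bookkeeping identical to \lem{slippery}: the code counting, the observation that for a fixed codeword $u$ the certificates $p$ for distinct $x$'s share a domain but differ in value (hence are pairwise inconsistent and cannot all have conditional probability $\ge\eta$), and the closing arithmetic.
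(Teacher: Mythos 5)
Your proposal follows the same skeleton as the paper's argument: keep the code-theoretic counting from \lem{slippery} untouched, and reduce the $\mu$-conditional claim to the uniform-conditional one. And you have put your finger on the real soft spot: the stated hypothesis $\mu[p]\le k/2^n$ for $|p|=n$ does \emph{not} by itself yield $\Pr_{f\sim\mu}[p\subseteq f\mid q\subseteq f]\le k\cdot\Pr_{f\sim U}[p\subseteq f\mid q\subseteq f]$ for an arbitrary $q$ of size up to $s(n)$. The paper's one-line proof of the corollary asserts exactly this implication without justifying it, and it is in fact false at that level of generality (take $\mu$ with a single ``heavy'' point $x_0$ of mass $k/2^n$ and the rest of its mass supported on strings disagreeing with $x_0$ on a fixed coordinate; conditioning on a $q$ that fixes that coordinate to $x_0$'s value concentrates all mass on $x_0$). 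So you were right to be suspicious, and right that the corollary is really meant to be invoked only for distributions with an additional structural guarantee.

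Where your writeup goes slightly astray is in how you frame the obstacle. You bound $\Pr_{f\sim\mu}[p\subseteq f\mid q\subseteq f]\le k\,2^{-|p'|}/\mu[q]$ and ask for $\mu[q]>n^c 2^{-|p'|}$, remarking ``in the worst case $\mu[q]>n^c 2^{-n}$''. That parenthetical has the quantifier backwards: the binding case is the \emph{smallest} admissible $|p'|$, namely $|p'|=c\log n+1$, which forces $\mu[q]>1/2$ --- unattainable for a $q$ of size $2^{(\log n)^d}$. This shows that the detour through $\mu[p\cup q]\le\mu[p']$ and then through $1/\mu[q]$ is too lossy to ever close. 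The paper's actual fix, applied in step~3 of \thm{QCMA-lbound}, sidesteps $\mu[q]$ entirely: it uses \lem{density} to pass to a $q$ such that $\mu_\ell|_q$ is $(1-\delta)$-dense on the coordinates outside $q$, which directly gives $\mu_\ell|_q[p]\le 2^{\delta|p\setminus q|}\cdot 2^{-|p\setminus q|}=2^{\delta|p\setminus q|}\cdot U[p\mid q]$, i.e.\ a multiplicative comparison between the two \emph{conditional} probabilities (a further factor of $2$ comes from conditioning on $S$). That is the hypothesis the corollary really needs, and with $\delta=1/n$, $|p\setminus q|\le n$ it yields the factor $k=4$ used there. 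With that reading, the rest of your argument --- list-recoverability bounding $|C_q|$, the $2^{c\log n}$ choices of $x$ per codeword, the fact that distinct $x$'s for the same $u$ give mutually inconsistent certificates, and the closing arithmetic $2^{(\log n)^{d+1}}\cdot 2^{c\log n}\cdot n\le 2^{(c+2)(\log n)^{d+1}}$ --- is exactly the paper's.
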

\begin{proof}
Since $\mu[p]\leq \frac{k}{2^n}$ for all $p$, partial assignments that have probability at least $\frac{k}{n^c}$ against $\mu$ conditioned on $q$ have probability at least $\frac{1}{n^c}$ against the uniform distribution conditioned on $q$. Now we can apply \lem{slippery}.
\end{proof}

\begin{theorem}\label{thm:YZ}
There exists a code $C$ such that
\begin{enumerate}
\item $\tR_C$ is $(\frac{1}{n^c}, 2^{(\log n)^d}, 2^{(c+2)(\log n)^{d+1}})$-slippery for any constant $d$.
\item There exists a quantum advice $\ket{z_f}$ with polynomially many qubits, and a polynomial-time quantum algorithm $\clA$ that has access to $\ket{z_f}, x$, and makes no queries to any oracle, such that for any $x \in \B^n$,
\[ \Pr_{f\sim U}[(u \leftarrow \clA(\ket{z_f}, x))\land((x,u) \in R_{C,f})] \geq 1 - 2^{-\Omega(n)},\]
where the probability is over uniformly random functions $f:[n]\times\B^m \to \B$, and the internal randomness of $\clA$. 
\end{enumerate}
\end{theorem}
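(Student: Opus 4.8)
The plan is to take $C$ to be the Reed--Solomon code of \cor{code-par}, with $m=\Theta(n)$ chosen so that $q=nm+1$ is a prime power and $k=\frac{5}{6}mn$, and to verify the two conditions separately; both are essentially ``off the shelf''. The first condition requires no new argument, since it is precisely the conclusion of \lem{slippery} applied to this $C$: that lemma already supplies the slipperiness parameters $(\tfrac{1}{n^c},\,2^{(\log n)^d},\,2^{(c+2)(\log n)^{d+1}})$ of $\tR_C$ for every pair of constants $c,d$, which is exactly what the theorem asks for. So I would simply invoke \lem{slippery} here.

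For the second condition, the pair $(\clA,\ket{z_f})$ is the Yamakawa--Zhandry algorithm \cite{YZ22} (see also \cite{Liu22}), read in the ``quantum advice, no queries'' form observed by \cite{LLPY23}. The point that makes this repackaging possible is that all of the $f$-queries made by the Yamakawa--Zhandry algorithm are non-adaptive and $x$-independent: the algorithm prepares $\bigotimes_{i=1}^{n}\bigl(2^{-m/2}\sum_{v\in\B^m}\ket{v}\bigr)$ and queries $f$ to obtain (up to inessential details) $\bigotimes_{i=1}^{n}\bigl(2^{-m/2}\sum_{v\in\B^m}\ket{v}\ket{f(i,v)}\bigr)$, a state that depends only on $f$. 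I would therefore let $\ket{z_f}$ consist of $\poly(n)$ independent copies of this post-query state, the copies serving only for success amplification (an advice state, unlike an oracle, cannot be ``re-queried''); this is a $\poly(n)$-qubit state that the prover can prepare from knowledge of $f$ alone. Given $\ket{z_f}$ and an input $x\in\B^n$, $\clA$ then runs the remaining, $x$-dependent part of the Yamakawa--Zhandry procedure: it makes no further oracle calls, and uses the $f$-values already recorded in $\ket{z_f}$ together with the algebraic structure of $C$ (ultimately, its list-recoverability) to produce a candidate $u$; since membership in $R_{C,f}$ can be checked from the data already present in $\ket{z_f}$, $\clA$ outputs a $u$ with $(x,u)\in R_{C,f}$ whenever the procedure finds one.

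The success-probability bound is then inherited from \cite{YZ22}. For uniformly random $f$ and any fixed $x$, each $|T_i|:=|\{v\in\B^m:f(i,v)=x_i\}|$ is distributed as $\mathrm{Binomial}(2^m,\tfrac{1}{2})$, hence lies within $(1\pm o(1))\,2^{m-1}$ except with probability $2^{-2^{\Omega(n)}}$ (Chernoff together with a union bound over $i\in[n]$, using $m=\Theta(n)$); on this event, the list-recovery guarantee of \cor{code-par} --- which, as the remark after that corollary notes, is exactly the parameter regime used in \cite{YZ22} --- is what forces the decoder to output a valid $u$ with probability $1-2^{-\Omega(n)}$ over $f$ and the internal coins of $\clA$. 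The only genuinely new obligation on our part is that parameter check, i.e.\ verifying that \cor{code-par} reproduces the Yamakawa--Zhandry regime, and this is already the content of the remark following that corollary; the substantive work --- the construction and analysis of the code-based quantum decoder, and the completeness of the relation --- lives inside \cite{YZ22} and would simply be imported, so in this theorem there is no real obstacle beyond citing it correctly.
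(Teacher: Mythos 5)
Your proposal matches the paper's proof: item 1 is dispatched by invoking \lem{slippery}, and item 2 is imported from \cite{YZ22,Liu22}, with the same key observation that the Yamakawa--Zhandry algorithm's queries are non-adaptive and $x$-independent, so the post-query state can be handed over as quantum advice (as done in \cite{Liu22}). The extra detail you supply about Chernoff bounds on $|T_i|$ and list-recovery is just an unpacking of what is already inside \cite{YZ22}, not a deviation from the paper's route.
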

\begin{proof}
Item 1 is due to \lem{slippery}. As stated before, the problem $\tR_C$, and the choice of parameters for the code $C$ in \lem{slippery}, is the same as \cite{YZ22}. Therefore, item 2 is due to \cite{YZ22,Liu22}.\footnote{The quantum algorithm in \cite{YZ22} makes some non-adaptive quantum queries (not depending on $x$), and does not take an advice state. The modified algorithm, which instead takes an advice state (which is essentially the state of the algorithm in \cite{YZ22} after its non-adaptive queries) and makes no queries, was described in \cite{Liu22}.}
\end{proof}

\section{QMA vs QCMA}
In this section, we prove \thm{QuerySep}. \thm{effQMA} will define the function $F_N$ and show that it is in QMA, and \thm{QCMA-lbound} will show that it is not in QCMA.

\subsection{Construction and QMA protocol}\label{sec:Fn-def}

Fix a code $C$ for which \thm{YZ} holds,
with $c=\log n$. We shall henceforth refer
to $R_{C,f}$ as only $R_f$ for this $C$. For a subset
$E \subseteq \B^n$, define the oracle
$O[f,E]\colon\B^n\times\B^{nm}\to\B$ as
\[ O[f, E](x,u) = \begin{cases}
            1 & \text{ if } (x,u) \in R_{f} \land x \notin E \\ 
            0 & \text{ otherwise}.
        \end{cases} \]

\begin{theorem}\label{thm:effQMA}
There exists an efficient uniform collection of query QMA protocols
(generated uniformly by a polynomial time Turing machine)
which uses $1$ query and polynomial witness size, and which outputs 0 on all oracles $O[f,E]$ with $|E|\ge (2/3)\cdot 2^n$,
and outputs 1 on $O[f,\emptyset]$ for $1-2^{-\Omega(n)}$ fraction of $f$-s.
\end{theorem}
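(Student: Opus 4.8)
The plan is to construct the QMA verifier directly from the Yamakawa-Zhandry quantum advice state. By \thm{YZ}, item 2, there is a polynomially-sized advice state $\ket{z_f}$ and a poly-time, zero-query algorithm $\clA$ such that for every fixed $x\in\B^n$, $\Pr_{f\sim U}[(u\leftarrow\clA(\ket{z_f},x))\land((x,u)\in R_{C,f})]\ge 1-2^{-\Omega(n)}$. The QMA witness will be $\ket{z_f}$ (padded out to the required fixed witness size, which is $\poly(n)$), and the verifier $Q$ will do the following on oracle access to some $O\colon\B^n\times\B^{nm}\to\B$: sample $x\in\B^n$ uniformly at random (using its own internal randomness / ancilla coins), run $\clA(\ket{\phi},x)$ to obtain a candidate $u$, make the single query $O(x,u)$, and output the answer bit. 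Since $\clA$ makes no queries, the only query used is the verification query, so this is a $1$-query (hence $1$-round) protocol, and it is clearly uniformly generated by a poly-time Turing machine since $\clA$ is.

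Next I would verify completeness and soundness. For completeness, fix $f$ in the good set (the $1-2^{-\Omega(n)}$ fraction on which $\clA$ works well). On input oracle $O[f,\emptyset]$, with witness $\ket{z_f}$: with probability $\ge 1-2^{-\Omega(n)}$ over the choice of $x$ and $\clA$'s internal randomness, the returned $u$ satisfies $(x,u)\in R_f$, and since $E=\emptyset$ we always have $x\notin E$, so $O[f,\emptyset](x,u)=1$ and $Q$ outputs $1$. Hence $\Pr[Q(O[f,\emptyset],\ket{z_f})=1]\ge 1-2^{-\Omega(n)}\ge 2/3$ for every good $f$. This handles the ``outputs 1 on $O[f,\emptyset]$ for a $1-2^{-\Omega(n)}$ fraction of $f$'' clause. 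For soundness, let $E$ be any set with $|E|\ge (2/3)\cdot 2^n$, and consider $O[f,E]$ for \emph{arbitrary} $f$ and an \emph{arbitrary} witness state $\ket{\phi}$ of the allowed size. Whatever $x$ and $u$ the verifier ends up querying, $O[f,E](x,u)=1$ requires $x\notin E$; but $x$ is drawn uniformly and independently of everything else (the choice of $x$ precedes running $\clA$ and is not influenced by $\ket{\phi}$ in any adversarial way — $x$ is part of $Q$'s fixed randomness), so $\Pr[x\notin E]\le 1-|E|/2^n\le 1/3$. Therefore $\Pr[Q(O[f,E],\ket{\phi})=1]\le 1/3$, giving soundness with $\eps=1/3$.

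One technical point I would be careful about: the definition of a query algorithm ``with witness'' has the witness placed in the work register and the rest of the state fixed, so I need the uniform sampling of $x$ to be implementable within $Q$'s own initialization (e.g.\ by Hadamarding a fresh block of ancilla qubits and using it coherently as the query-input, or measuring it first — either works since $x$ is never fed back). This is routine. The only genuinely substantive input is \thm{YZ} item 2, which is quoted from \cite{YZ22,Liu22}; modulo that, there is no real obstacle — the argument is essentially that $O[f,\emptyset]$ is a verification oracle for $R_f$ on which the YZ advice succeeds, while erasing a $2/3$-fraction of inputs kills the verifier regardless of the witness because the verifier's query point lands in the erased set with probability $\ge 2/3$. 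I should also double-check that the witness size is a \emph{fixed} polynomial independent of the oracle (it is: $\ket{z_f}$ has $\poly(n)$ qubits by \thm{YZ}, and we pad to that fixed bound), so that the protocol fits \defn{efficientQMA} / the bounded-round QMA definition. The main ``obstacle,'' such as it is, is purely bookkeeping: making sure the single query is genuinely the only query (guaranteed since $\clA$ is query-free) and that $x$'s randomness is internal to $Q$ so the soundness bound against an adversarial $\ket{\phi}$ goes through.
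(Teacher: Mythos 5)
Your proof is correct and takes essentially the same approach as the paper: you use the Yamakawa--Zhandry advice state as the witness, sample a uniform $x$, run $\clA$ with no queries to get a candidate $u$, make the single query $O(x,u)$, and output the query result. The completeness and soundness arguments (averaging over $f$ for the former, and the uniform $x$ landing in $E$ with probability at least $2/3$ for the latter) are the same as in the paper; your extra remarks on padding the witness, internal randomness for $x$, and the genuinely single query are routine bookkeeping that the paper leaves implicit.
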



\begin{proof}
The quantum witness for the algorithm will be quantum advice state for $R_f$ from \thm{YZ}. The quantum algorithm works as follows: it samples a uniformly random $x \in \B^n$, and runs the procedure from \thm{YZ} to find a $u$ such that $(x,u)\in R_f$. Note that this requires no queries to the oracle. Then it queries the oracle at $(x,u)$ and returns the query output. If the oracle is $O[f,\emptyset]$ and the actual state $\ket{z_f}$ from \thm{YZ} is provided as witness, then due to \thm{YZ} we have,
\[ \Pr_{f\sim U}[\clA^{O[f,\emptyset]}(\ket{z_f}) = 1] \geq 1 - 2^{-\Omega(n)}.\]
On the other hand, if the oracle is $O[f,E]$ for $|E| \geq \frac{2}{3}\cdot 2^n$, no matter what witness is provided, and what $u$ is obtained from this witness, the oracle outputs 0 on $(x,u)$ for $\frac{2}{3}$ of the $x$-s. Since the algorithm samples a uniformly random $x$ and queries it with some $u$ for every $f$, we have for every $f$,
\[ \Pr[\clA^{O[f,E]}(\ket{z_f})=1] \leq \frac{1}{3}.\qedhere\]
\end{proof}

\paragraph{Defining the function $F_N$.} We now define the following partial query function with input size
$2^n\times 2^{mn}$:
its $1$-inputs are all the oracles $O[f,\emptyset]$ for which the
algorithm from Theorem \thm{effQMA} accepts with probability at least $2/3$,
and its $0$-inputs are $O[f,E]$ for which $O[f,\emptyset]$ is a $1$-input
and $|E|\ge (2/3)\cdot 2^n$. Note that these oracles correspond
to the inputs ``$x$'' of the query problem. This defines a family
$F_N$ of query tasks with $N=2^n\times 2^{mn}$,
and \thm{effQMA} showed that this family is in
efficiently-computable QMA.

\subsection{Useful notation and lemmas}\label{sec:QCMA-lems}

Recall that a non-adaptive quantum algorithm works on $T$
query-input registers and $T$ query-output registers
plus an additional work register $W$, so that its basis
states look like
\[\ket{i_1}\ket{b_1}\ket{i_2}\ket{b_2}\dots\ket{i_T}\ket{b_T}\ket{W}.\]
To clear up notational clutter, we will use
$\vec{i}\in[N]^T$ to represent a tuple of $T$ indices in
$[N]$. Moreover, for a string $x\in\B^N$ and for 
$\vec{i}\in[N]^T$,
we will define $x_{\vec{i}}\coloneqq
(x_{\vec{i}_1},x_{\vec{i}_2},\dots,
x_{\vec{i}_T})$. The basis states can then be written
$\ket{\vec{i}}\ket{\vec{b}}\ket{W}$, and
the action of the query unitary
$U^x$ to the string $x$ is to map
$\ket{\vec{i}}\ket{\vec{b}}\ket{W}
\to \ket{\vec{i}}\ket{\vec{b}\oplus x_{\vec{i}}}\ket{W}$, extended linearly to the
rest of the space. (Here $\oplus$ denotes the bitwise XOR
of the two strings of length $T$.)

Define $\Pi_{\vec{i}}\coloneqq\ket{\vec{i}}\bra{\vec{i}}
\otimes I_{\vec{b},W}$ to be the projection onto basis states
with $\vec{i}$ in the query-input registers. For $i\in[N]$,
define $\Pi_i\coloneqq\sum_{\vec{i}\ni i}\Pi_{\vec{i}}$
to be the projection onto basis states with $i$ occurring in one
of the query-input registers.
The projections $\Pi_{\vec{i}}$ are onto
orthogonal spaces, though the projections $\Pi_i$ are not.
Observe that 
$\sum_{\vec{i}}\Pi_{\vec{i}}=I$, and that
$\sum_i \Pi_i=\sum_i\sum_{\vec{i}\ni i} \Pi_{\vec{i}}
=\sum_{\vec{i}}\sum_{i\in \vec{i}} \Pi_{\vec{i}}=T\cdot I$.
Moreover, since the oracle unitary $U^x$ does not change
the query-input registers, $U^x$ commutes with both $\Pi_{\vec{i}}$
and $\Pi_i$. Another convenient property is that
if $x_{\vec{i}}=y_{\vec{i}}$ for two strings $x,y\in\B^N$,
then $\Pi_{\vec{i}}(U^x-U^y)=0$;
this holds because both $U^x$ and $U^y$ map
$\ket{\vec{i}}\ket{\vec{b}}\ket{W}$ to the same vector
when $x_{\vec{i}}=y_{\vec{i}}$.
Using these properties, we have the following lemma.

\begin{lemma}[Hybrid argument for nonadaptive queries]
\label{lem:hybrid}
For any strings $x,y\in\B^N$ and any quantum state
$\ket{\psi}=\sum_{\vec{i},\vec{b},W}\alpha_{\vec{i},\vec{b},W}
    \ket{\vec{i}}\ket{\vec{b}}\ket{W}$, we have
\[\|U^x\ket{\psi}-U^y\ket{\psi}\|_2^2
\le 4\sum_{i:x_i\ne y_i}\|\Pi_i\ket{\psi}\|_2^2.\]
\end{lemma}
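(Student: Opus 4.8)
The statement to prove is the hybrid argument for nonadaptive queries (Lemma~\ref{lem:hybrid}): for strings $x,y\in\B^N$ and state $\ket{\psi}$, we have $\|U^x\ket{\psi}-U^y\ket{\psi}\|_2^2 \le 4\sum_{i:x_i\ne y_i}\|\Pi_i\ket{\psi}\|_2^2$. The natural approach is to decompose $\ket{\psi}$ along the orthogonal projectors $\Pi_{\vec i}$ — writing $\ket{\psi}=\sum_{\vec i}\Pi_{\vec i}\ket{\psi}$ — and observe that $(U^x-U^y)$ acts block-diagonally with respect to this decomposition, since $U^x$ and $U^y$ both preserve the query-input registers. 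On the block indexed by $\vec i$, the operator $U^x-U^y$ vanishes whenever $x_{\vec i}=y_{\vec i}$, i.e.\ whenever none of the $T$ indices in $\vec i$ lands on a coordinate where $x$ and $y$ differ. So only the ``bad'' blocks — those $\vec i$ containing at least one index $i$ with $x_i\ne y_i$ — contribute.

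**The main computation.** By orthogonality of the $\Pi_{\vec i}$,
\[
\|U^x\ket{\psi}-U^y\ket{\psi}\|_2^2 = \sum_{\vec i}\|(U^x-U^y)\Pi_{\vec i}\ket{\psi}\|_2^2 \le \sum_{\vec i\text{ bad}} \big(\|U^x\Pi_{\vec i}\ket{\psi}\| + \|U^y\Pi_{\vec i}\ket{\psi}\|\big)^2 = \sum_{\vec i\text{ bad}} 4\|\Pi_{\vec i}\ket{\psi}\|_2^2,
\]
using the triangle inequality and the fact that $U^x,U^y$ are unitary. Then I bound the sum over bad $\vec i$: each bad $\vec i$ contains some index $i$ with $x_i\ne y_i$, so $\sum_{\vec i\text{ bad}}\|\Pi_{\vec i}\ket{\psi}\|_2^2 \le \sum_{i:x_i\ne y_i}\sum_{\vec i\ni i}\|\Pi_{\vec i}\ket{\psi}\|_2^2 = \sum_{i:x_i\ne y_i}\|\Pi_i\ket{\psi}\|_2^2$, where the last equality uses that $\Pi_i=\sum_{\vec i\ni i}\Pi_{\vec i}$ is a sum of orthogonal projectors, so $\|\Pi_i\ket{\psi}\|_2^2 = \sum_{\vec i\ni i}\|\Pi_{\vec i}\ket{\psi}\|_2^2$. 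Combining gives the claimed bound with the factor $4$.

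**Expected obstacle.** This is a fairly routine hybrid/triangle-inequality argument, so there is no deep obstacle; the only point requiring a little care is the double-counting step — a single bad $\vec i$ may contain several differing coordinates, so the inequality $\sum_{\vec i\text{ bad}}\|\Pi_{\vec i}\ket\psi\|^2 \le \sum_{i:x_i\ne y_i}\sum_{\vec i\ni i}\|\Pi_{\vec i}\ket\psi\|^2$ is genuinely an inequality (overcounting) rather than an equality, but that only helps us. I would also make sure to invoke explicitly the stated property ``$\Pi_{\vec i}(U^x-U^y)=0$ when $x_{\vec i}=y_{\vec i}$'' to justify dropping the good blocks, and the commutation of $U^x$ with $\Pi_{\vec i}$ to keep the block structure clean. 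No additional lemmas beyond what is in the preliminaries are needed.
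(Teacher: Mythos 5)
Your proof is correct and takes essentially the same approach as the paper: both decompose $(U^x-U^y)\ket{\psi}$ via the orthogonal projectors $\Pi_{\vec{i}}$, discard the blocks with $x_{\vec{i}}=y_{\vec{i}}$, overcount to pass from tuples $\vec{i}$ to individual indices $i$, and extract the factor of $4$ from $\|U^x-U^y\|\le 2$. The only cosmetic difference is that you apply the factor-of-$4$ bound per block (before overcounting and aggregating into $\Pi_i$), while the paper carries $(U^x-U^y)$ through the regrouping, commutes it past $\Pi_i$, and applies the spectral bound at the very end; both orderings are equally valid.
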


\begin{proof}
We write the following, with justification afterwards.
\begin{align*}
\|U^x\ket{\psi}-U^y\ket{\psi}\|_2^2
&=\left\|\sum_{\vec{i}}\Pi_{\vec{i}}(U^x-U^y)\ket{\psi}\right\|_2^2
\\&=\sum_{\vec{i}}\|\Pi_{\vec{i}}(U^x-U^y)\ket{\psi}\|_2^2
\\&=\sum_{\vec{i}:x_{\vec{i}}\ne y_{\vec{i}}}
    \|\Pi_{\vec{i}}(U^x-U^y)\ket{\psi}\|_2^2
\\&\le\sum_{\vec{i}}\sum_{i\in\vec{i}:x_i\ne y_i}
    \|\Pi_{\vec{i}}(U^x-U^y)\ket{\psi}\|_2^2
\\&=\sum_{i:x_i\ne y_i}\sum_{\vec{i}\ni i}
    \|\Pi_{\vec{i}}(U^x-U^y)\ket{\psi}\|_2^2
\\&=\sum_{i:x_i\ne y_i}
    \left\|\sum_{\vec{i}\ni i}\Pi_{\vec{i}}
                (U^x-U^y)\ket{\psi}\right\|_2^2
\\&=\sum_{i:x_i\ne y_i}\|\Pi_i(U^x-U^y)\ket{\psi}\|_2^2
\\&=\sum_{i:x_i\ne y_i}\|(U^x-U^y)\Pi_i\ket{\psi}\|_2^2
\\&\le 4\sum_{i:x_i\ne y_i}\|\Pi_i\ket{\psi}\|_2^2.
\end{align*}
In the first line, we used $\sum_{\vec{i}}\Pi_{\vec{i}}=I$.
In the second, we used the orthogonality of the images of
the projections $\Pi_{\vec{i}}$. In the third, we used
$\Pi_{\vec{i}}(U^x-U^y)=0$ when $x_{\vec{i}}=y_{\vec{i}}$.

In the fourth line, we replaced the sum over $\vec{i}$
containing at least one $i$ with $x_i\ne y_i$ with a weighted
sum, where the weight of $\vec{i}$ is the number of
$i\in\vec{i}$ such that $x_i\ne y_i$; this weight is
$0$ when $x_{\vec{i}}=y_{\vec{i}}$ and at least $1$ when
$x_{\vec{i}}\ne y_{\vec{i}}$. This weight can be represented
as a sum over $i\in\vec{i}$ with $x_i\ne y_i$,
since we are counting $\vec{i}$ once for each such $i$ in the tuple.

The fifth line flips the order of the sums, and the sixth
uses orthogonality of the images of $\Pi_{\vec{i}}$ to put
the sum back inside the squared norm. The seventh
line is the definition of $\Pi_i$, and the eighth holds since
$\Pi_i$ commutes with $U^x$ and $U^y$. Finally, the last
line follows either from the triangle inequality,
or from the fact that the spectral norm of $(U^x-U^y)$
is at most $2$ (since $U^x$ and $U^y$ are unitary).
\end{proof}

For an oracle $x\in\B^n$ and a block $B\subseteq[N]$,
use $x[B]$ to denote the string $x$ with queries in $B$ erased;
that is, $x[B]_i=x_i$ if $i\notin B$, and $x[B]_i=0$ for $i\in B$.
Next, we use this hybrid argument in combination with
a Markov inequality to show that if a distribution $\mu$
over $\B^n$ has a set of queries $B\in[N]$ that nearly always
return zero for oracles sampled from $\mu$, then
for any non-adaptive quantum algorithm,
there exists a large set of oracles (measured against $\mu$)
such that the algorithm does not detect whether
any subset of $B$ is erased.

\begin{lemma}[Nonadaptive algorithms don't detect oracle erasures]
\label{lem:erasure}
Fix $\ket{\psi}$ representing the state of a quantum algorithm
before a batch of non-adaptive queries. Let $\mu$ be a distribution
over $\B^N$, and let $\epsilon>0$. Let $B=\{i\in[N]:\Pr_{x\sim\mu}[x_i=1]\le \epsilon\}$. Then there exists a set $S\subseteq\B^N$
such that $\mu[S]\ge 1/2$ and for all $x\in S$ and all subsets
$B_1,B_2\subseteq B$, we have
\[\|U^{x[B_1]}\ket{\psi}-U^{x[B_2]}\ket{\psi}\|_2
\le \sqrt{8\epsilon T}.\]
\end{lemma}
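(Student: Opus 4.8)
The plan is to combine the hybrid argument of \lem{hybrid} with a Markov/averaging argument over oracles drawn from $\mu$. First I would observe that for any fixed oracle $x$ and any subsets $B_1, B_2 \subseteq B$, the strings $x[B_1]$ and $x[B_2]$ differ only at coordinates $i \in B$ where $x_i = 1$: if $x_i = 0$ then erasing $i$ does nothing, and coordinates outside $B$ are untouched by both. Hence by \lem{hybrid},
\[
\|U^{x[B_1]}\ket{\psi} - U^{x[B_2]}\ket{\psi}\|_2^2
\le 4\sum_{i \in B:\, x_i = 1} \|\Pi_i\ket{\psi}\|_2^2.
\]
So it suffices to find a large-$\mu$-measure set $S$ on which the right-hand side is at most $8\epsilon T$, i.e.\ $\sum_{i \in B:\, x_i=1}\|\Pi_i\ket{\psi}\|_2^2 \le 2\epsilon T$ for all $x \in S$.

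Next I would bound the expectation of this quantity over $x \sim \mu$. We have
\[
\E_{x\sim\mu}\Bigl[\sum_{i\in B:\, x_i=1}\|\Pi_i\ket{\psi}\|_2^2\Bigr]
= \sum_{i\in B}\Pr_{x\sim\mu}[x_i=1]\cdot\|\Pi_i\ket{\psi}\|_2^2
\le \epsilon\sum_{i\in B}\|\Pi_i\ket{\psi}\|_2^2
\le \epsilon\sum_{i\in[N]}\|\Pi_i\ket{\psi}\|_2^2 = \epsilon T,
\]
where the last equality uses the identity $\sum_i \Pi_i = T\cdot I$ established in the text, together with $\|\ket{\psi}\|_2 = 1$. (If $\ket{\psi}$ carries additional ancilla padding, the identity still gives $\sum_i\|\Pi_i\ket\psi\|_2^2 = T\|\ket\psi\|^2 = T$.) Now apply Markov's inequality: the set $S = \{x : \sum_{i\in B:\,x_i=1}\|\Pi_i\ket{\psi}\|_2^2 \le 2\epsilon T\}$ has $\mu[S] \ge 1/2$. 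For $x \in S$, the displayed hybrid bound gives $\|U^{x[B_1]}\ket{\psi} - U^{x[B_2]}\ket{\psi}\|_2^2 \le 4\cdot 2\epsilon T = 8\epsilon T$, as required.

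The only mild subtlety — and the one place to be careful — is making sure the coordinate-difference argument is airtight: $x[B_1]_i \ne x[B_2]_i$ forces $i$ to lie in the symmetric difference of $B_1$ and $B_2$ (hence in $B$) and to have $x_i = 1$, since setting a bit to $0$ only changes it when it was $1$. This is where we crucially use that erasure writes $0$, not an arbitrary value. Everything else is the routine expectation computation plus Markov, so I do not anticipate a real obstacle; the content is entirely in setting up \lem{hybrid} correctly and noting that $\Pr_{x\sim\mu}[x_i=1]\le\epsilon$ for $i\in B$ by definition of $B$.
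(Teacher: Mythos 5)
Your proof is correct and matches the paper's argument essentially step for step: bound $\E_{x\sim\mu}\bigl[\sum_{i\in B:\,x_i=1}\|\Pi_i\ket{\psi}\|_2^2\bigr]\le\epsilon T$, apply Markov to get $S$, and combine with \lem{hybrid} using the observation that $x[B_1]$ and $x[B_2]$ can only disagree on coordinates $i\in B$ with $x_i=1$. The only difference is cosmetic (you invoke the hybrid lemma before the Markov step rather than after).
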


\begin{proof}
We write the following, with justification afterwards.
\begin{align*}
\E_{x\sim\mu}\left[\sum_{i:x_i\ne x[B]_i}
    \|\Pi_i\ket{\psi}\|_2^2\right]
&=\E_{x\sim\mu}\left[\sum_{i\in B} x_i
    \|\Pi_i\ket{\psi}\|_2^2\right]
\\&=\sum_{i\in B}\|\Pi_i\ket{\psi}\|_2^2\E_{x\sim\mu}[x_i]
\\&\le \epsilon \sum_{i\in B}\|\Pi_i\ket{\psi}\|_2^2
\\&\le \epsilon \sum_{i\in [N]}\|\Pi_i\ket{\psi}\|_2^2
\\&=\epsilon \sum_{i\in [N]}\sum_{\vec{i}\ni i}
        \|\Pi_{\vec{i}}\ket{\psi}\|_2^2
\\&=\epsilon T\sum_{\vec{i}}\|\Pi_{\vec{i}}\ket{\psi}\|_2^2
\\&=\epsilon T.
\end{align*}
The first line follows by noting that $x_i\ne x[B]_i$ can only
happen if both $i\in B$ and $x_i=1$; we replace the sum
over $i:x_i\ne x[B]_i$ with the sum over $i\in B$, and
multiply the summand by the indicator for $x_i=1$, which is
$x_i$ itself.

The second line is the result of pushing the expectation inside the
sum, and observing that the norm does not depend on $x$
and can be factored out of the expectation. The third line
follows from the definition of $B$: we know that for all
$i\in B$, the probability of $x_i=1$ is at most $\epsilon$.
The fourth replaces the sum over $B$ with that over $[N]$.
The fifth uses the definition of $\Pi_i$, and exchanges the
sum over $\vec{i}$ with the squared norm using orthogonality.
The sixth line follows by noting that each $\vec{i}$ appears
exactly $T$ times in this double sum.
Finally, the last line follows by pushing the sum inside
the squared norm (using orthogonality), and recalling
that $\sum_{\vec{i}}\Pi_{\vec{i}}=I$, together with the fact
that $\ket{\psi}$ is a unit vector.

Given this bound on the expectation, we can apply Markov's inequality
to conclude that at least half the strings $x$ (weighted by $\mu$)
must satisfy
$\sum_{i:x_i\ne x[B]_i}\|\Pi_i\ket{\psi}\|_2^2\le 2\epsilon T$.
Let $S$ be the set of such strings $x$; then $\mu[S]\ge 1/2$.
Observe that for any $x\in S$ and any $B_1,B_2\subseteq B$, the set
$\{i:x[B_1]_i\ne x[B_2]_i\}$ is a subset of
$\{i:x_i\ne x[B]_i\}$. We now apply \lem{hybrid} to get
\[\|U^{x[B_1]}\ket{\psi}-U^{x[B_2]}\ket{\psi}\|_2^2
\le 4\sum_{i:x[B_1]_i\ne x[B_2]_i}\|\Pi_i\ket{\psi}\|_2^2
\le 4\sum_{i:x_i\ne x[B]_i}\|\Pi_i\ket{\psi}\|_2^2\le 8\epsilon T.\]
The desired result follows by taking square roots.
\end{proof}

We will need some properties of distributions on $\B^N$.
For such a distribution $\mu$, let
$\RU(\mu)\coloneqq \max_{x\in\B^N}\log_2(2^N \mu[x])$
be the max relative entropy of $\mu$ relative
to the uniform distribution. We will generally be interested
in distributions $\mu$ such that $\RU(\mu)$ is small
(say, $\polylog N$), which means that no input
$x\in\B^N$ has probability $\mu[x]$ much larger than $2^{-N}$.

For a partial assignment $p$, let $\mu[p]$ be the probability
mass of strings in $\B^N$ which are consistent with $p$.
Let $|p|$ be the size of $p$ (the number of revealed bits in $p$).
We define the density of $\mu$ to be
$\density(\mu)\coloneqq 1-\max_p \frac{\log_2(2^{|p|}\mu[p])}{|p|}$,
with the maximum taken over partial assignments $p$.
The density of the uniform distribution is $1$.

For a partial assignment $p$, we let $\mu|_p$ denote
the distribution $\mu$ conditioned on the sampled input being
consistent with $p$.

\begin{lemma}[Densification]\label{lem:density}
Let $\mu$ be a distribution over $\B^N$, and let $\delta\in(0,1)$.
Then there exists a partial assignment $p$ such that
\begin{enumerate}
    \item $|p|\le \RU(\mu)/\delta$
    \item $\RU(\mu|_p)\le \RU(\mu)/\delta$
    \item $\density(\mu|_p)>1-\delta$, where the density
    is measured on the bits not fixed by $p$.
\end{enumerate}
\end{lemma}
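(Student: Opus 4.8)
The plan is to prove this by an iterative ``densification'' procedure: as long as the current distribution has some partial assignment witnessing low density, we condition on that assignment and repeat, arguing that the process must terminate quickly because each step consumes a nontrivial amount of the max-relative-entropy budget.

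\medskip

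First I would set up the iteration. Start with $\mu_0 = \mu$ and $p_0 = \emptyset$. Given $\mu_j = \mu|_{p_j}$, if $\density(\mu_j) > 1-\delta$ (measured on the bits not fixed by $p_j$), we stop and output $p = p_j$. Otherwise, by definition of density there is a partial assignment $q$ on the free bits with $\frac{\log_2(2^{|q|}\mu_j[q])}{|q|} \ge 1-\delta$, i.e.\ $\mu_j[q] \ge 2^{-\delta|q|}$; we set $p_{j+1} = p_j \cup q$ and $\mu_{j+1} = \mu_j|_{q} = \mu|_{p_{j+1}}$. The key quantitative observation is a potential/budget argument on $\RU$: conditioning on an event of probability $\beta$ multiplies every point mass by at most $1/\beta$, so $\RU(\mu_{j+1}) \le \RU(\mu_j) + \log_2(1/\mu_j[q]) \le \RU(\mu_j) + \delta|q|$. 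Meanwhile $|p_{j+1}| = |p_j| + |q|$. Combining these, at every stage $\RU(\mu_j) \le \RU(\mu) + \delta|p_j|$. But $\RU$ is always nonnegative, and more usefully we can track the quantity $\Phi_j \coloneqq |p_j| - \RU(\mu_j)/\delta$ (rearranged: this is bounded). Actually the cleanest way: since $\RU(\mu_j)\ge 0$ always, the bound $\RU(\mu_j) \le \RU(\mu) + \delta |p_j|$ does not immediately bound $|p_j|$, so instead I would track the decrease directly. Note $\RU(\mu_j) - \RU(\mu_{j+1}) $ need not be positive; the right potential is to observe that each added block $q$ has $\mu_j[q] \ge 2^{-\delta|q|}$ while a ``typical'' block of that size has mass $\approx 2^{-|q|}$, so the block carries $(1-\delta)|q|$ worth of ``surprise'' that must be paid out of a finite budget. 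Formally: let $x^*$ be the most likely string of $\mu$. For every $j$, $\mu_j[x^*_{\text{restricted}}]$ (if $x^*$ is consistent with $p_j$) grows, but it can't exceed $1$; chasing this gives $|p_j|(1-\delta) \le \RU(\mu) + O(\ldots)$, hence $|p_j| \le \RU(\mu)/\delta$ once we also use that the process stops. Let me restate the intended bound: I expect $|p|\le \RU(\mu)/\delta$ to fall out because at termination $\RU(\mu|_p) \le \RU(\mu) + \delta|p|$ together with the fact that each step strictly increased $|p|$ while the ``normalized'' relative entropy $\RU(\mu_j) - (1-\delta)|p_j|$ or similar stayed bounded above by $\RU(\mu)$ and below by $-$something forces few steps.

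\medskip

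For properties (1) and (2), I would argue them simultaneously via the invariant $\RU(\mu_j) + (1-\delta)|p_j| \le \RU(\mu)$. Checking the inductive step: $\RU(\mu_{j+1}) + (1-\delta)|p_{j+1}| \le \RU(\mu_j) + \delta|q| + (1-\delta)(|p_j|+|q|) = \RU(\mu_j) + (1-\delta)|p_j| + |q|$ — this is off by $|q|$, so that invariant is wrong as stated; the correct one uses that $\RU$ \emph{decreases}. Here is the fix: conditioning on $q$ can only \emph{remove} point masses and rescale, but the maximum point mass $\mu_j[x]$ satisfies $\mu_{j+1}[x] = \mu_j[x]/\mu_j[q]$ only for $x$ consistent with $q$; for the string achieving the max of $\mu_{j+1}$, its mass under $\mu_j$ was at least $\mu_{j+1}[x]\cdot\mu_j[q] \cdot$ (wait, it's $\mu_{j+1}[x]=\mu_j[x]/\mu_j[q]$, so $2^N\mu_{j+1}[x] = 2^N\mu_j[x]/\mu_j[q]$, giving $\RU(\mu_{j+1}) = \max_x [\log_2(2^N\mu_j[x]) - \log_2\mu_j[q]]$ over $x$ consistent with $q$, $\le \RU(\mu_j) + \delta|q|$ as before). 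So the honest statement is just $\RU(\mu_{j+1}) \le \RU(\mu_j)+\delta|q|$ and $|p_{j+1}|=|p_j|+|q|$, yielding $\RU(\mu_j)\le \RU(\mu)+\delta|p_j|$; and separately, since densifying strictly shrinks support, and a $(1-\delta)$-dense-type step on $|q|$ coordinates with $q$-mass $\ge 2^{-\delta|q|}$ means the \emph{number of free coordinates decreases by} $|q|$ while... The cleanest rigorous route, which I would actually write, is: apply the \cite{GPW17, CDG18} structural result that any $\mu$ with $\RU(\mu)\le \ell$ can be written as a convex combination of ``$(1-\delta)$-dense'' distributions each supported on a sub-cube fixing at most $\ell/\delta$ coordinates, then pick the component with largest weight and let $p$ be its fixed-coordinate assignment; properties (1) and (3) are immediate from that result, and (2) follows since a convex component has mass at least $2^{-\ell/\delta}$ (there are at most that many sub-cubes of that size... actually a cleaner counting) so conditioning costs at most $\ell/\delta$ in $\RU$.

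\medskip

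The main obstacle I anticipate is getting the bookkeeping exactly right so that the \emph{same} bound $\RU(\mu)/\delta$ serves for both $|p|$ and $\RU(\mu|_p)$ — the two are in tension because shrinking $\RU$ by conditioning is exactly what costs us in $|p|$, and vice versa. I expect the resolution is the observation that a densification step on $|q|$ coordinates with conditional mass $\ge 2^{-\delta|q|}$ actually \emph{reduces} $\RU$ by at least $(1-\delta)|q|$ — because before conditioning the $q$-cube had total mass $\ge 2^{-\delta|q|}$ spread over $2^{N-|p_j|-|q|}$ strings, so some string had mass $\ge 2^{-\delta|q|}/2^{N-|p_j|-|q|}$, i.e.\ $\RU(\mu_j) \ge (1-\delta)|q| - |p_j| + \text{something}$... — so that $\RU$ strictly drops and the number of steps, hence $|p|$, is controlled by the initial $\RU$ budget. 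Once that monotonicity is pinned down, (1) follows from summing the per-step coordinate counts against the per-step $\RU$ drops, (2) follows because the total $\RU$ added back (which is at most $\delta|p| \le \RU(\mu)$, wait — at most $\delta \cdot \RU(\mu)/\delta = \RU(\mu)$, giving $\RU(\mu|_p)\le \RU(\mu)/\delta$ comfortably), and (3) is the termination condition by construction. I would present it via the \cite{GPW17, CDG18} decomposition to keep the writeup short, deriving all three properties from that single invocation plus an elementary ``largest convex component'' argument.
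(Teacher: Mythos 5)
There is a genuine gap, rooted in a misread of the density definition. With $\density(\mu) = 1 - \max_p \frac{\log_2(2^{|p|}\mu[p])}{|p|}$, a distribution \emph{fails} to be $(1-\delta)$-dense iff there is a $q$ with $\mu[q]\ge 2^{-(1-\delta)|q|}$, not $\mu[q]\ge 2^{-\delta|q|}$ as you wrote. That flip of $\delta\leftrightarrow 1-\delta$ is the source of all the downstream confusion with your potential functions, and it's why your invariant $\RU(\mu_{j+1})\le \RU(\mu_j)+\delta|q|$ never quite closed the loop. With the correct bound, the iterative version \emph{does} work, and the key observation you were missing is that the bound on the accumulated $p=\bigcup q_j$ comes ``for free'': since $\mu[p]=\prod_j \mu_{j-1}[q_j]\ge 2^{-(1-\delta)|p|}$ holds cumulatively, and always $\mu[p]\le 2^{\RU(\mu)-|p|}$ (sum over the $2^{N-|p|}$ strings consistent with $p$, each of mass $\le 2^{\RU(\mu)-N}$), one gets $\delta|p|\le\RU(\mu)$ immediately — no potential-function bookkeeping needed. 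Item (2) then follows from $\RU(\mu|_p)\le\RU(\mu)+\log_2(1/\mu[p])\le\RU(\mu)+(1-\delta)|p|\le\RU(\mu)/\delta$.

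The paper dispenses with the iteration entirely: it takes $p$ to be a single \emph{largest} partial assignment with $\mu[p]\ge 2^{-(1-\delta)|p|}$, and items (1) and (2) follow by exactly the two inequalities above, while item (3) follows from maximality (any disjoint extension $q$ would violate $\mu[p\cup q]<2^{-(1-\delta)(|p|+|q|)}$). Your final pivot to quoting \cite{GPW17,CDG18} as a black box and picking the heaviest convex component is a legitimate high-level idea, but as you yourself flag, the counting for item (2) is not pinned down; moreover, even with a weight bound $w_{\max}\ge 2^{-\RU(\mu)/\delta}$, the resulting estimate $\RU(\mu|_p)\le\RU(\mu)+\RU(\mu)/\delta$ is strictly weaker than the stated $\RU(\mu)/\delta$, so that route would require a tighter weight bound (essentially $w_{\max}\ge 2^{-(1-\delta)|p|}$, which is exactly what the direct argument supplies).
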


\begin{proof}
Let $p$ be the largest partial assignment for which
$\mu[p]\ge 2^{-(1-\delta)|p|}$. Then
\[2^{-(1-\delta)|p|}
\le \mu[p]
=\sum_{x\supseteq p} \mu[x]
\le 2^{N-|p|}\cdot 2^{-(N-\RU(\mu))}
=2^{\RU(\mu)-|p|},\]
so $\delta |p|\le \RU(\mu)$, from which the first item
follows. Next,
\[\RU(\mu|_p)
=\max_x \log_2(2^N\mu|_p[x])
=\max_{x\supseteq p} \log_2(2^N\mu[x]/\mu[p])
\le \RU(\mu)+\log_2(1/\mu[p])\]
\[\le \RU(\mu)+\log_2(2^{(1-\delta)|p|})
=\RU(\mu)+(1-\delta)|p|
\le \RU(\mu)+(1-\delta)\RU(\mu)/\delta
=\RU(\mu)/\delta,\]
which gives the second item.
Finally, to upper bound the density of $\mu|_p$, let
$q$ be a partial assignment on a set of indices disjoint
from that of $p$. By the maximality of $p$,
we must have $\mu[p\cup q]< 2^{-(1-\delta)(|p|+|q|)}$.
Now,
\[\log_2(2^{|q|}\mu|_p[q])
=\log_2(2^{|q|}\mu[q\cup p]/\mu[p])
< \log_2(2^{|q|}2^{-(1-\delta)(|p|+|q|)}/2^{-(1-\delta)|p|})
=\delta|q|.\]
From this it follows that $\density(\mu|_p)>1-\delta$,
as desired.
\end{proof}

\subsection{QCMA lower bound}

\begin{theorem}\label{thm:QCMA-lbound}
There is no bounded-round, polynomial-cost QCMA protocol
for the family $F_N$  defined in \sec{Fn-def}.
More formally, consider any family of QCMA protocols for the query
problems $F_N$. If the number of rounds for these
QCMA protocols grows slower than $o(\log\log N/\log\log\log N)$,
then either the number of queries or the witness size must grow like
$\log^{\omega(1)} N$.
\end{theorem}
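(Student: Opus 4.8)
The plan is to prove the contrapositive: suppose we are given a family of QCMA protocols for $F_N$ with $r = r(N)$ rounds, witness size $k = k(N)$, and $T = T(N)$ queries per round, all bounded by $\polylog(N) = \polylog(2^n) = \poly(n)$; we derive a contradiction by showing that for large $n$ the protocol must fail soundness on some $0$-input $O[f,E]$. The key structural observation is that since $F_N$ has exponentially many good $1$-inputs $O[f,\emptyset]$ (namely a $1-2^{-\Omega(n)}$ fraction of all $f$, by \thm{effQMA}), a counting argument produces a \emph{single} classical witness string $w\in\B^k$ that is accepted on $O[f,\emptyset]$ for at least a $2^{-k}$ fraction of the good $f$. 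Fixing $w$ makes the verifier a fixed $r$-round $T$-query quantum algorithm $Q$ (no longer depending on the witness), and the distribution $\mu_0$ over $f$ consistent with this event has min-entropy $\Omega(2^n) - k$, i.e. $\RU(\mu_0) \le k + 2^{-\Omega(n)} \cdot \text{(junk)}$ is still small — in the sense that $\mu_0[p] \le 2^{k+o(n)}/2^{|p|}$ for partial assignments $p$ of the relevant size.

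The heart of the argument is a round-elimination induction: I maintain a distribution $\mu_t$ over oracles $O[f,\emptyset]$ (all consistent with a growing partial assignment $q_t$ of $f$), a residual $(r-t)$-round algorithm $Q_t$, and the invariant that $Q_t$ reproduces the behavior of $Q$ (accepts with probability $\ge 2/3$) on all oracles in $\supp(\mu_t)$ while $\RU(\mu_t)$ and $|q_t|$ stay controlled. To eliminate one round: (i) apply \lem{density} (densification) to replace $\mu_t$ by a conditioned distribution $\mu_t|_p$ that is $(1-\delta)$-dense, paying $|p| \le \RU(\mu_t)/\delta$ extra fixed bits of $f$; (ii) use slipperiness (\lem{slippery} / \cor{slip-mu}, with $c=\log n$ as fixed in \sec{Fn-def}) to bound the set $B'$ of oracle positions $(x,u)$ for which $\Pr_{f\sim\mu_t|_p}[(x,u)\in R_f]$ is non-negligible: these are exactly positions of the form $(x,u)$ where enough of $u$'s defining coordinates $f(i,u_i)$ are already pinned, and slipperiness says the total support of the associated partial assignments $p_{x,u}$ is at most $t(n) = 2^{O((\log n)^{d+1})}$; fix all of those coordinates of $f$ too, obtaining $q_{t+1}$; (iii) now every oracle position \emph{outside} the "fixed or $B'$" region returns $0$ with overwhelming probability under the conditioned distribution, so by \lem{erasure} (applied with $\epsilon = 2^{-\Omega(n)}$ and this $T$) there is a set $S$ with $\mu_{t+1}[S]\ge 1/2$ on which erasing those positions is undetectable by the first round of $Q_t$; (iv) therefore replace the first round's real queries by the known partial assignment $q_{t+1}$ padded with zeros — this costs only $O(\sqrt{\epsilon T})$ in the state's error per round — folding $U^x U_0$ into a new first unitary and producing $Q_{t+1}$ with one fewer round, valid on $\mu_{t+1}$ restricted to $S$.

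Tracking parameters through the induction is where the round bound $r = o(\log\log N/\log\log\log N) = o(\log n/\log\log n)$ comes from and is the main obstacle. Each round multiplies $|q_t|$ (and hence $\RU(\mu_{t+1})$, up to $k$) by roughly $t(n)/\eta \sim 2^{O((\log n)^{d+1})}$ where the exponent $d$ itself grows with the current size $s(n)$ of $q_t$ (\cor{code-par} needs $\ell = s(n)$); so after $r$ rounds $|q_r|$ is a tower-like iterate, and we need $|q_r| \ll 2^n$ together with the accumulated error $r\cdot O(\sqrt{\epsilon T}) \ll 1/3$ and the total probability loss $\prod_t \mu_{t+1}[S_t]/(\text{conditioning}) > 0$, i.e. $\supp(\mu_r)\ne\emptyset$. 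Choosing $\epsilon = 2^{-\Omega(n)}$ handles the error for any $\poly(n)$ number of rounds trivially; the binding constraint is $\log|q_r| = o(n)$, which unwinds to exactly $r = o(\log n/\log\log n)$, matching the slipperiness-parameter bottleneck flagged in the introduction. Once the induction terminates, $Q_r$ makes \emph{no} queries yet accepts with probability $\ge 2/3 - o(1)$ on some oracle $O[f,\emptyset]$ with $f$ consistent with $q_r$; but $|q_r| = 2^{o(n)}$ means $q_r$ constrains $(x,u)\in R_f$ for at most $2^{o(n)}$ values of $x$, so we may pick $E\supseteq\B^n\setminus\{x : q_r \text{ forces some } (x,u)\in R_f\}$ with $|E|\ge (2/3)2^n$, and since $Q_r$ makes no queries it accepts $O[f,E]$ with the same probability $\ge 2/3 - o(1) > 1/3$ — violating soundness on this $0$-input of $F_N$, the desired contradiction.
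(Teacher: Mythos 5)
Your proof follows essentially the same architecture as the paper's: fix one witness $w$ accepted on a $2^{-k}$ fraction of good $f$'s, pass to a high-min-entropy distribution $\mu_0$, then run a round-elimination induction where each iteration densifies ($\lem{density}$), fixes the coordinates certifying the ``dangerous'' $(x,u)$ pairs via slipperiness (\cor{slip-mu}), and uses the hybrid/erasure argument (\lem{erasure}) to replace the first round of queries by a fake oracle; the parameter unwinding $\log\RU(\mu_{\ell+1})\approx(\log n)^2\log\RU(\mu_\ell)$ is where $r=o(\log n/\log\log n)$ comes from, and your conclusion and contradiction match the paper's. So the high-level structure is right.

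There is one quantitative slip worth flagging, because it actually touches the point where the slipperiness machinery bites. You take $\epsilon=2^{-\Omega(n)}$ in the application of \lem{erasure}, declaring that ``every oracle position outside the fixed or $B'$ region returns $0$ with overwhelming probability.'' That is not what slipperiness delivers. With the code parameters of \cor{code-par} and $c=\log n$, \cor{slip-mu} only controls partial assignments $p$ with $\Pr[p\subseteq f\mid q\subseteq f]\ge \eta$ for $\eta\approx 1/n^{\log n}=2^{-\log^2 n}$; positions outside the slippery set can still carry probability as large as roughly $\eta$ (for example, $(x,u)$ for which all but $c\log n$ of the $n$ certifying coordinates are already pinned by $q$). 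So the best you can take is $\epsilon\gtrsim 2^{-\log^2 n}$, not $2^{-\Omega(n)}$. This is exactly why the paper chooses $\epsilon=1/(3200r^2T)$ and then explicitly \emph{checks} that $\epsilon/4\ge 1/n^{\log n}$ under the assumed bounds on $T$ and $r$; that consistency check is a real constraint, not a triviality. The argument still closes with the correct $\epsilon$ (since $r\sqrt{8\epsilon T}$ is still $o(1)$), so this is not fatal, but it masks the fact that the slipperiness threshold, not the per-round error, is the real reason $\epsilon$ cannot be taken exponentially small, and hence why strengthening slipperiness (\conj{slip-sep}) is what would remove the round bound.

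A smaller structural remark: you apply the slipperiness-fixing \emph{before} the erasure step, whereas the paper applies \lem{erasure} to $\mu_\ell|_q$ first (defining $B_\ell$ there) and fixes $Z$ afterward. The paper's ordering is what makes the final hybrid clean: $B_\ell$ is defined against the dense distribution $\mu_{\ell-1}|_q$, the slipperiness fixing guarantees every pair outside $B_\ell$ becomes fixed in $\supp(\mu_\ell)$, and hence the fake-oracle discrepancy set $A_\ell$ and the final erased set $B$ are both subsets of $B_\ell$, so \lem{erasure} applies directly. If you fix $Z$ first and only then define the erasure set against the conditioned distribution, you have to argue separately that conditioning on $Z$ does not create newly-dangerous positions outside $Z$ (the slipperiness bound is stated for $q$, not $q\cup Z$, and the conditioned distribution need no longer be dense). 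This is repairable, but it's an extra step your sketch does not supply.
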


\begin{proof}
Consider a QCMA verifier for the query task $F_N$.
Let $k=k(N)$ denote the witness size for this verifier;
we can assume for contradiction that $k(N)=O(\polylog(N))=O(\poly n)$.
Since the witness is a classical string, there are only
$2^k$ witnesses over which we quantify. Since each
$1$-input $O[f,\emptyset]$ has some witness accepting it,
we conclude that at least one witness $w$ of size $k$
is a valid witness for at least a $2^{-k}$ fraction
of the $1$-inputs, and hence also for at least a
$2^{-k}(1-2^{-\Omega(n)})$ fraction of all oracles $O[f,\emptyset]$
(including those not in the domain of $F_N$). This is because
the fraction of $f$s for which the quantum algorithm
does not succeed with probability at least $2/3$ is
at most $2^{-\Omega(n)}$. We can assume
$2^{-k}(1-2^{-\Omega(n)})\ge 2^{-2k}$.

Let $S$ be the set of $f$ such that $O[f,\emptyset]$ is accepted
by the algorithm given witness $w$. Let $\mu$ be the uniform
distribution over $S$, and observe that $\RU(\mu)\le 2k$.
Let $Q$ be the quantum algorithm which hard-codes the witness
$w$ into the verifier; then $Q$ accepts all oracles
$O[f,\emptyset]$ for $f\in\supp(\mu)$ and rejects
all oracles $O[f,E]$ if $|E|\ge (2/3)2^n$.

We now proceed by iteratively removing rounds of $Q$.
We define a sequence of quantum algorithms
$Q=Q_0,Q_1,\dots,Q_{r-1},Q_r$, where $Q_\ell$ has $r-\ell$
rounds of $T$ queries each; at the beginning, $Q_0=Q$
has $r$ rounds, and at the end, $Q_r$ makes no queries.
We also define a corresponding sequence of distributions
$\mu=\mu_0,\mu_1,\dots,\mu_{r-1},\mu_r$, each of which
will be uniform over a set of functions $f$; this set will
grow smaller with each round.

To define $(Q_{\ell+1},\mu_{\ell+1})$ given
$(Q_\ell,\mu_\ell)$, we proceed in several steps.

\begin{enumerate}
\item First, use \lem{density} with $\delta=1/n$ to find
a partial assignment $q$ with $|q|\le n\RU(\mu_{\ell})$,
$\RU(\mu_{\ell}|_q)\le n\RU(\mu_{\ell})$,
and with $\mu_{\ell}|_q$ being $(1-\delta)$-dense on the
bits not used by $q$.

\item Second, use \lem{erasure}
with $\epsilon = 1/3200r^2 T$
on the distributions of oracles $O[f,\emptyset]$
when $f$ is sampled from $\mu_{\ell}|_q$. The state $\ket{\psi}$
in the lemma will be the state of the algorithm $Q$ just
before the first batch of $T$ queries.
The lemma gives a set 
$S\subseteq\supp(\mu_{\ell}|_q)$ with $\mu_{\ell}|_q[S]\ge 1/2$.
It has the property that for all $f\in S$
and all sets $B_1,B_2$ containing pairs $(x,u)$
with $\Pr_{f\sim\mu_\ell|_q} [O[f,\emptyset](x,u)=1]\le \epsilon$,
we have
$\|U^{O[f,B_1]}\ket{\psi}-U^{O[f,B_2]}\ket{\psi}\|_2\le 1/20r$.
Condition $\mu_{\ell}|_q$
on the set $S$ to get a distribution $\mu'_\ell$.

Note that $O[f,B_1]$ is an abuse of notation, since
normally we erase inputs $x$ to $f$ from the oracle,
yet $B_1$ is a set of pairs $(x,u)$. We will use this abuse
of notation throughout; if we write $O[f,B]$ where $B$
is a set of pairs, we mean to erase those pairs from the
oracle, while if $B$ is a subset of $\Dom(f)$,
we mean to erase the pairs $(x,u)$ for $x\in B$ and all $u$ from the oracle.

\item Third, use the slippery property
from \cor{slip-mu} on $q$ to conclude that
the number of bits used by partial assignments $p$
for which $(p,x,u)\in \tilde{R}_C$ and
$\Pr_{f\sim\mu'_\ell}[p\subseteq f|q\subseteq f]\ge \epsilon/4$ is small.
Recall that $(p,x,u)\in \tilde{R}_C$ means that the condition
$O[f,\emptyset](x,u)=1$ is equivalent to $p\subseteq f$
for all $f$; such certifying $p$ have $|p|=n$.
\cor{slip-mu} can be applied because $\epsilon/4$ is larger than
$1/n^c$ for $c=\log n$, since we are choosing $r=o(\log n/\log\log n)$
and $T\le O(2^{\log^2 n}/\log n)$.
Now, since $\mu_\ell|_q$ is $(1-\delta)$-dense outside of $q$,
the probability of a partial assignment $p$ against $\mu_\ell|_q$
is at most $2^{\delta|p|}$ times the probability against
the uniform distribution conditioned on $q$.
Here $|p|=n$ and $\delta=1/n$, so the probability
against $\mu_\ell|_q$ is at most twice that against the uniform
distribution conditioned on $q$.
Moving from $\mu_\ell|_q$ to $\mu_\ell'$ conditions
on a set $S$ of probability at least $1/2$, so it can increase
the probability of $p$ by at most
a factor of $2$. Hence the probability of $p$ against $\mu'_\ell$ is overall at most 4 times its probability against the uniform distribution. By \cor{slip-mu}, we conclude the total
number of bits used by partial assignments $p$
for which $\Pr_{f\sim\mu_{\ell}'}[O[f,\emptyset](x,u)=1]\ge \epsilon$ is small.
Let $Z$ be the set of all such bits.

Our next modification to $\mu_\ell'$ will be to fix
the bits in $Z$ to the highest-probability partial assignment
(measured against $\mu_\ell'$), and let $\mu_\ell''$ be
$\mu_\ell'$ conditioned on that partial assignment being
consistent with the sampled function $f$.

\item Finally, set $\mu_{\ell+1}=\mu_\ell''$. Also set $Q_{\ell+1}$
to be the quantum algorithm which is the same as $Q_\ell$,
except that the first batch of queries is made to a fake
oracle instead of a real one. The fake oracle is defined as
follows: on queries $(x,u)$ for which $O[f,\emptyset](x,u)$
is fixed for all $f\in\supp(\mu_{\ell+1})$, return this value
$O[f,\emptyset](x,u)$; on queries $(x,u)$ for which this
value is not fixed for $f\in\supp(\mu_{\ell+1})$, return $0$.
Note that the fake oracle does not depend on the true input
oracle $O[f,E]$, so queries to it can be implemented by $Q_{\ell+1}$
with making queries to the real oracle. This replaces the first
round of $Q_\ell$, so $Q_{\ell+1}$ has one less round.
\end{enumerate}

Our approach will work as follows: we start with a function
$f\in\supp(\mu_r)$, and find a large set $E$ of inputs $x$ for which
$(x,u)$ were not fixed by $\supp(\mu_r)$. We will then argue
that the quantum algorithms in the sequence cannot distinguish
the oracles $O[f,\emptyset]$ and $O[f,E]$. This is clear for
the last algorithm $Q_r$, since it makes no queries. We work
backwards to show that each algorithm $Q_{r-1}$, $Q_{r-2}$,
up to $Q_0$ also cannot substantially distinguish between
these two oracles. This gives a contradiction, since we know
$Q$ accepts the oracle $O[f,\emptyset]$, yet $O[f,E]$ is a $0$-input.

In order to find $f$ and $E$, we first show that
$\RU(\mu_r)$ is not too large. Recall that $\RU(\mu_0)\le 2k$.
We will show that $\RU(\cdot)$ did not increase too much
in each of the $r$ iterations that got us from $\mu_0$ to
$\mu_r$. In one iteration, we defined $\mu_{\ell+1}$
from $\mu_\ell$ in $3$ steps. The first step moved from
$\mu_\ell$ to $\mu_\ell|_q$ with
$\RU(\mu_\ell|_q)\le n\RU(\mu_\ell)$. The second step
conditioned the latter distribution on a set $S$ of probability
mass at least $1/2$, which can only increase $\RU(\cdot)$ by $1$,
so $\RU(\mu_\ell')\le n\RU(\mu_\ell)+1$.

The third step
found the set of all bits fixed in partial assignments $p$
which certify some $(x,u)$ as evaluating to $1$, and picked
the highest-probability partial assignment on those bits.
The maximum increase in $\RU(\cdot)$ is the number of bits
that were fixed in this way. This number comes from
\thm{YZ}, and depends on the number of bits fixed in $q$;
when $|q|=2^{(\log n)^d}$, the number we are looking for is
$2^{(c+2)(\log n)^{d+1}}$, so we can express this as
$2^{(c+2)(\log n)(\log |q|)}$. We had $|q|\le n\RU(\mu_\ell)$
and $c=\log n$. It is not hard to see that this additive
increase dominates $n\RU(\mu_\ell)+1$; assuming everything
is large enough (e.g. $\log n$ is sufficiently large,
and $\RU(\mu_\ell)$ is at least $n^2$, which is without
loss of generality by restricting the original $\mu_0$
to a smaller set if necessary), we can get the final
upper bound $\RU(\mu_{\ell+1})\le 2^{(3+\log n)^2\log\RU(\mu_\ell)}$.

In other words, $\log \RU(\mu_\ell)$ increases by a factor
of at most $(3+\log n)^2$ in each iteration, starting
at $\log\max\{2k,n^2\}\le (3+\log n)\log k$ (assuming $k\ge 2$).
We therefore have $\log \RU(\mu_r)\le (3+\log n)^{2r+1}\log k$.

We next essentially apply another iteration (without the second step) to $\mu_r$.
Using \lem{density}, we find a partial assignment
$q'$ such that $\mu_r|_{q'}$ is $(1-\delta)$-dense outside of $q'$,
with $\delta=1/n$. We then apply \thm{YZ}
to conclude there are few pairs $(x,u)$ with
$\Pr_f[O[f,\emptyset](x,u)=1]\ge 1/2$,
and hence few pairs $(x,u)$ with $\Pr_f[O[f,\emptyset](x,u)=1]=1$
when $f$ is sampled from $\mu_r|_{q'}$;
the number of such pairs is at most $2^{(3+\log n)^{2r+3}\log k}$.
Using $k=O(\poly(n))$ and $r= o(\log n/\log\log n)$,
this means that there are at most $2^{o(n)}$
pairs $(x,u)$ that are fixed to $1$ for all the oracles
$O[f,\emptyset]$ for $f\in \supp(\mu_r|_{q'})$.
Therefore, there are $2^{n-o(n)}$ many inputs $x$
such that for all $u$, the pair $(x,u)$ is not fixed to $1$
by $\supp(\mu_r|_{q'})$. Let $E$ be the set of such $x$;
then $|E|\ge (2/3)2^n$. Let $\hat{f}\in\supp(\mu_r|_{q'})$ be arbitrary.

We now know that $Q$ accepts $O[\hat{f},\emptyset]$ and that
$O[\hat{f},E]$ is a $0$-input. We also know that
$Q_r$ cannot distinguish $O[\hat{f},\emptyset]$ and $O[\hat{f},E]$,
since it makes no queries. Now, let
$B=\{(x,u):x\in E,O[\hat{f},\emptyset](x,u)=1\}$.
Moreover, let $B_\ell$ be the set of pairs $(x,u)$
which had $\Pr_{f\sim\mu_{\ell-1}|_q}[O[f,\emptyset](x,u)=1]\le \epsilon$
in iteration $\ell$ (where $q$ is the partial assignment from step 1 of iteration $\ell$). Note that the pairs not in $B_\ell$
are all fixed in all the oracles in the support of $\mu_{\ell}$,
because we choose values for the bits used by their proving
partial assignments $p$. This means that $B\subseteq B_\ell$
for all $\ell$. Also, let $O_\ell$ be the oracle used by
$Q_\ell$ to simulate the first query batch of $Q_{\ell-1}$.
Recall that $O_\ell(x,u)$ returns $0$ unless $(x,u)$
is fixed to $1$ in all $O[f,\emptyset]$ for
$f\in\supp(\mu_{\ell})$. Since the support of $\mu_\ell$
decreases as a subset in each iteration, the bits fixed
in $\mu_\ell$ are also fixed in $\mu_r$, and hence also agree
with $\hat{f}$. This means that $O_\ell$ can be written
as an erased oracle $O[\hat{f},A_\ell]$ for some
set $A_\ell$ of pairs $(x,u)$ that were not fixed in $\mu_\ell$;
in other words, $A_\ell\subseteq B_\ell$.

We now note the oracle $O[\hat{f},E]$ is the same as
$O[\hat{f},B]$. Additionally, since $B,A_\ell\subseteq B_\ell$,
we have by \lem{erasure},
\[\|U^{O[\hat{f},B]}\ket{\psi}-U^{O[\hat{f},A_\ell]}\ket{\psi}\|_2
\le 1/20r\]
where $\ket{\psi}$ is the state right before the first
query of the algorithm $Q_{\ell-1}$.
This can also be written
\[\|U^{O[\hat{f},E]}\ket{\psi}-U^{O_\ell}\ket{\psi}\|_2
\le 1/20r.\]
Now, applying additional unitary matrices does not change
the $2$-norm, and $Q_\ell$ replaces only the first
query of $Q_{\ell-1}$ with $O_\ell$ and applies
the same unitaries as $Q_{\ell-1}$ in all other rounds.
If we use $Q_\ell(O)$ to denote the final state
of $Q_\ell$ on the oracle $O$, we therefore get
\[\|Q_\ell(O[\hat{f},E])-Q_{\ell-1}(O[\hat{f},E])\|_2\le 1/20r.\]
By triangle inequality, we then get
\[\|Q(O[\hat{f},E])-Q_{r}(O[\hat{f},E])\|_2\le 1/20.\]
Since $\emptyset\subseteq B_\ell$ for all $\ell$, the
same argument also works to show that
\[\|Q(O[\hat{f},\emptyset])-Q_{r}(O[\hat{f},\emptyset])\|_2\le 1/20,\]
and of course we also have
$Q_{r}(O[\hat{f},\emptyset])=Q_{r}(O[\hat{f},E])$
since $Q_r$ makes no queries. A final application of the triangle
inequality gives us
\[\|Q(O[\hat{f},E])-Q(O[\hat{f},\emptyset])\|_2\le 1/10.\]
Since measuring the output qubit of $Q(O[\hat{f},\emptyset])$
gives $1$ with probability at least $2/3$, it is not
hard to show that this implies measuring the output qubit
of $Q(O[\hat{f},E])$ gives $1$ with probability above $1/2$.
This gives the desired contradiction, since the latter is a $0$-input.
\end{proof}

\phantomsection\addcontentsline{toc}{section}{References} 
\renewcommand{\UrlFont}{\ttfamily\small}
\let\oldpath\path
\renewcommand{\path}[1]{\small\oldpath{#1}}
\emergencystretch=1em 
\printbibliography

\appendix

\section{Diagonalization argument}\label{app:diag}

\subsection{QMA and QCMA for Turing machines}
In this section, we formally define $\QMA, \QCMA$, the oracle classes and bounded-round oracle classes corresponding to these.
\begin{definition}[Oracle-querying quantum verifier circuit]
An \emph{oracle-querying quantum verifier circuit (OQQV}
is the following type of quantum circuit. It takes in three
types of input sets of qubits: one set of qubits
representing the input string
$x$; a second set of qubits representing a witness state;
and a third set of ancilla qubits. It has gates from a universal
gate set, but it can additionally use a special oracle gate.
The oracle gate can take in any number $k$ of qubits, and
gives $k$ qubits as output.

For any oracle $\O\colon\B^*\to\B$, the behavior of the oracle gates
in a quantum verifier circuit that is instantiated with oracle $\O$
is as follows: each $k$-qubit basis state is mapped
$\ket{y}\to(-1)^{\O(y)}\ket{y}$ by the $k$-qubit oracle gate.

If $C$ is an OQQV,
$x$ is an input, $\ket{\phi}$ is a witness, and $\O$ is an oracle,
then let $C^{\O}(x,\phi)$ denote the Bernoulli random variable
which is the measurement outcome of the first output qubit
of the circuit $C$ when run on input $x$,
witness $\phi$, and zeroes for the ancilla qubits, assuming the
oracle gates of $C$ apply the oracle $\O$.
\end{definition}

\begin{definition}[Soundness and Completeness]
Let $\O\colon\B^*\to\B$ be an oracle, let $n\in\bN$,
let $C$ be an OQQV with $n$
input qubits, and let $f$ be a partial function from $\B^n$ to $\B$.
\begin{enumerate}
\item We say that $C$ is \emph{$\QMA$-sense sound} for $f$ relative to
$\O$ if for every input $x\in f^{-1}(0)$ and every state $\ket{\phi}$
(on a number of qubits equal to the witness size of $C$),
we have $\Pr[C^{\O}(x,\phi)=1]\le 1/3$. The constant $1/3$ is called
the $\QMA$-sense soundness of $C$.
\item We say that $C$ is \emph{$\QCMA$-sense sound} for $f$
relative to $\O$
if the same condition holds, but only for all classical strings
$\ket{\phi}$ instead of all pure states. Note that $\QMA$-soundness
implies $\QCMA$-soundness.
\item We say that $C$ is \emph{$\QMA$-sense complete} for $f$
relative to $\O$ if for every input $f^{-1}(1)$, there exists
a state $\ket{\phi}$ (on the right number of qubits) such that
$\Pr[C^{\O}(x,\phi)=1]\ge 2/3$. The constant $1-2/3$ is called
the $\QMA$-sense completeness of $C$.
\item We say that $C$ is \emph{$\QCMA$-sense complete} for $f$
relative to $\O$ if the same condition holds with a classical witness:
for every $x\in f^{-1}(1)$, there exists a classical string $\ket{\phi}$
that the circuit accepts. Note that $\QCMA$-sense completeness implies
$\QMA$-sense completeness.
\end{enumerate}
\end{definition}

\begin{definition}[Oracle QMA and QCMA]
A $\QMA$ protocol for a language 
$L\subseteq\B^*$ relative to an oracle $\O$ is a polynomial-time
Turing machine $M$ which, on input $0^n$, outputs an
OQQV $C$ which is $\QMA$-sense
sound and complete for the indicator function of $L\cap \B^n$.
A $\QCMA$ protocol for $L$ relative to $\O$ is the same but with
$\QCMA$-sense soundness and completeness.

The class $\QMA^{\O}\subseteq \mathcal{P}(\B^*)$ is the set of
all languages $L$ for which there is a $\QMA$ protocol
relative to $\O$. Similarly, the class $\QCMA^{\O}$ is the set of all
languages for which there is a $\QCMA$ protocol relative to $\O$.
\end{definition}

Observe that $\QCMA^{\O}\subseteq\QMA^{\O}$. This is because
if we had a $\QCMA$ protocol for $L$ relative to $\O$, we could
modify each OQQV it outputs to make the circuit ``measure'' the witness
before proceeding (by making an untouched copy of each qubit of the
witness).
After this modification, the $\QCMA$-sense soundness will imply
$\QMA$-sense soundness, so the resulting OQQV will be $\QMA$-sense
sound and complete. A Turing machine can implement this modification,
and such a TM will be a $\QMA$ protocol for $L$ relative to $\O$.

\begin{definition}[Bounded round QMA and QCMA]
We define a bounded-round OQQV circuit in the natural way
(a quantum circuit that has polynomially many oracle gates in parallel in each ''round'', and a bounded number of such rounds).

For a function $r\colon\bN\to\bN$, we define
$\QMA^{O,r}$ to be the $r$-bounded-round version of
$\QMA$ relative to the oracle $O$; this measure allows
$\QMA$ protocols which, on input $0^n$, generate
an OQQV circuit which uses at most $r(n)$ rounds of queries
to the oracle and is otherwise a valid $\QMA$ oracle.
We define $\QCMA^{O,r}$ similarly.
\end{definition}

\subsection{From query separation to oracle separation}

\begin{theorem}
\thm{QuerySep} implies \thm{TMSep}.
\end{theorem}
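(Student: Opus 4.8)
The plan is to carry out a standard diagonalization argument that builds the oracle $\O$ in stages, using the query separation of \thm{QuerySep} at each stage. The query function family $F=\{F_N\}_{N\in I}$ from \thm{QuerySep} is in 1-round query QMA but has superpolynomial $r$-round QCMA query complexity for $r = o(\log\log N/\log\log\log N)$. I would encode instances of $F_N$ into the oracle $\O$ at appropriately spaced-out input-length blocks, so that the language $L_\O$ defined by ``reading off the $F_N$-instance planted at block $n$ and outputting its value'' lands in $\QMA^{\O,r'}$ for a small round bound $r'$ (inherited from the 1-round query QMA protocol, plus whatever rounds are needed to locate and read the planted instance — a constant, or easily $o(\log n/\log\log n)$), while simultaneously diagonalizing against every $\QCMA^{\O,r}$ machine.

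\textbf{Setup and enumeration.} First I would fix an enumeration $M_1, M_2, \dots$ of all polynomial-time Turing machines that output OQQV circuits with at most $r(n)$ rounds of oracle queries and with witness size and query count bounded by a fixed polynomial (a standard clocking/enumeration trick handles the ``all polynomials'' quantifier, at the cost of processing each machine at infinitely many lengths). I would pick a very sparse sequence of input lengths $n_1 < n_2 < \cdots$, growing fast enough that (a) the block for $n_i$ is large enough to embed a hard instance of $F_N$ with $N = N(n_i)$ for which the QCMA query complexity already exceeds the resources of $M_i$, and (b) the blocks are far enough apart that decisions made at block $n_i$ do not interfere with blocks handled earlier. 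At stage $i$, the oracle has been fixed on all inputs of length $< n_i$; I must now choose the bits of $\O$ on the block associated with $n_i$ so as to (i) plant a valid $F_N$-instance (so completeness of the QMA protocol is preserved), and (ii) defeat $M_i$.

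\textbf{The diagonalization step.} For stage $i$: the candidate QMA protocol decides membership by locating the planted $F_N$-instance and running the 1-round query QMA verifier of \thm{QuerySep} on it (reading the oracle bits of the block as the ``input'' to the query problem). Because this is a bona fide QMA protocol, to keep $L_\O \in \QMA^{\O,r'}$ all I need is that the planted instance is a genuine $1$- or $0$-input of $F_N$. To defeat $M_i$: suppose for contradiction that for every choice of the block, $M_i^\O$ (with its classical witness) correctly decides membership. Then $M_i$ restricted to the block gives an $r$-round QCMA \emph{query} algorithm for $F_N$ with $\poly$ queries and $\poly$ witness — but by \thm{QuerySep}, $\mathrm{QCMA}^r(F_N)$ is not $O(\polylog N)$, a contradiction once $N(n_i)$ is large enough. (One subtlety: $M_i$ may make oracle queries \emph{outside} the block $n_i$; since those positions are already fixed, $M_i$ can simulate them internally, so its queries to the block alone constitute a legitimate query algorithm.) Hence there exists a block assignment — in fact a planted $F_N$-instance, either a YES-instance $O[f,\emptyset]$ or a NO-instance $O[f,E]$ — on which $M_i$ errs; fix the block to that assignment. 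This preserves both the QMA-membership property and kills $M_i$.

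\textbf{Main obstacle.} The delicate point is \emph{bookkeeping the round budgets and the query/witness polynomials simultaneously across stages while keeping $L_\O$ inside the right round-bounded QMA class}. The query separation gives a round threshold $r = o(\log\log N/\log\log\log N)$ in terms of the query-problem size $N$, whereas the theorem we want states $r = o(\log n/\log\log n)$ in terms of the Turing-machine efficiency parameter $n$; I must choose the embedding so that $N$ is at least exponential in the relevant length parameter, converting $\log\log N$ into $\log n$. I also need the QMA-side round count $r'$ (locating the block + the single query round from \thm{QuerySep}) to be comfortably below the QCMA round bound $r$, so that the \emph{same} oracle witnesses $L_\O \in \QMA^{\O,r'} \subseteq \QMA^{\O,r}$ yet $L_\O \notin \QCMA^{\O,r}$ — i.e., the separation is witnessed at a single round bound. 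This is a matter of choosing the padding/spacing carefully; once the parameters line up, the diagonalization itself is routine. I would present the full argument along these lines, borrowing the standard framework for translating query separations to oracle separations (as in, e.g., the construction underlying \cite{RT18}-style results).
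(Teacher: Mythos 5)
Your proposal is correct and follows essentially the same route as the paper's Appendix~A: plant one instance of the query problem per (sparse) input length, use the efficiently-computable query-QMA protocol to keep the language inside $\QMA^{\O,r}$ regardless of how instances are planted, enumerate candidate QCMA protocols together with polynomial resource bounds, and at each stage fix the planted instance to one on which the induced $r$-round QCMA query algorithm fails (after hardcoding the already-fixed oracle positions), with the round-bound translation $\log\log N \leftrightarrow \log n$ handled by the fact that the Turing-machine input $\langle n\rangle$ has length logarithmic in the query-problem index. The only cosmetic difference is that you handle the polynomial-bound quantifier by reprocessing each machine at infinitely many lengths, while the paper enumerates pairs $(M,\alpha)$ of machines and explicit polynomial bounds — these are interchangeable standard devices.
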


\begin{proof}
Let $F=\{f_n\}_{n\in I}$ be a funcion family which is efficiently
computable in $\QMA^1$ but for which the growth rate of $\QCMA^r(f_n)$
as $n\to\infty$ is not in $O(\polylog(n))$, for $r=o(\log\log n/\log\log\log n)$. Let $R(n)$ be some $o(\log n/\log\log n)$ function. We need to construct
an oracle $\O\colon\B^*\to\B$ and a language $L\subseteq\B^*$
such that $L\in\QMA^{\O, R}$ but $L\notin\QCMA^{\O,R}$.

We interpret the oracle $\O$ as taking as input either a pair
of positive integers $(n,i)$, or else a single integer $n$
(the encoding will specify the formatting unambiguously).
On inputs $n$, the oracle $\O(n)$ behaves like an indicator for the
set $I$, returning $1$ if $n\in I$ and $0$ if $n\notin I$.
On input $(n,i)$, if $n\in I$, the oracle will return
$x^n_i$, where $x^n$ is a specific string in $\Dom(f_n)$
that we will choose later. If $n\notin I$, the oracle returns
arbitrarily (its behavior won't matter). The oracle's behavior
on inputs that are incorrectly formatted is also arbitrary.
This completes the specification of the oracle $\O$, except for
the choice of strings $x^n$ for $n\in I$ (one string from the domain
of each $f_n$ function).

The language $L$ will contain the encodings $\langle n\rangle$
of all $n\in I$ for which $f_n(x^n)=1$. We've now specified
both the language $L$ and the oracle $\O$ except for the choice
of strings $x^n$.

We note that regardless of the choice of strings $x^n$,
we will have $L\in \QMA^{\O,R}$. To see this, observe that using
\defn{efficientQMA} we can get a polynomial-time Turing machine
$M$ which takes in $\langle n\rangle$ and returns a
$\polylog(n)$-sized $R$-round OQQV $C_n$ which,
assuming it runs on an oracle
$\O$ that encodes the string $x$, will accept some witness if $f(x)=1$
and reject all witnesses if $f(x)=0$. This Turing machine
maps $\langle n\rangle$ to an OQQV circuit, but we can convert it
to a classical circuit which maps $\langle n\rangle$ to an OQQV
circuit for all $\langle n \rangle$ of a fixed size -- that is,
for all $n$ in an interval $[2^k,2^{k+1})$. We could then
collapse this circuit-outputting-a-circuit into a single OQQV circuit,
which takes in $\langle n\rangle$ and a witness (and ancillas)
and, after making queries to $\O$in $R$ rounds, decides whether to accept
or reject the witness. Moreover, these OQQV circuits
can be generated uniformly by a Turing machine that takes
a size $0^k$ as input and generates in polynomial time
the circuit which handles all $n\in[2^k,2^{k+1})$.
We can also easily modify these OQQV circuits to have them query
$\O(n)$ to ensure that $n\in I$ before proceeding
(rejecting otherwise). The resulting Turing machine is an $R$-round $\QMA$
protocol for $L$ relative to $\O$, so $L\in\QMA^{\O,R}$.

It remains to select the inputs $x^n$, one per function $f_n$,
in a way that ensures $L\notin\QCMA^{\O,R}$. We do so by diagonalization.
Enumerate all pairs $(M,\alpha)$ where $M$ is a candidate $\QCMA^R$
protocol (i.e.\ a Turing machine that outputs $R$-round OQQV circuits) and
$\alpha$ is a growth rate in $O(\poly(n))$ (we can assume the function
$\alpha(n)$ is always $n^c+c$ for some positive integer $c$, to ensure
that $\alpha(n)$ can be efficiently computable and that there are
countably many such growth rates).

We fix choices $x^n$ using an iterative procedure. At each step
of the iteration, there will be some cutoff $N\in\bN$ such that
$x^n$ has been fixed for all $n<N$, but $x^n$ has not yet been
fixed for all $n\ge N$. Each step $t$ of the procedure
will eliminate the possibility that for the $t$-th pair
$(M,\alpha)$ in our enumeration, $M$ is a $\QCMA^R$ protocol for $L$
relative to $\O$ which runs in $\alpha(k)$ steps on inputs
of size $k$ and produces an OQQV $C_k$ which takes in a witness
of size at most $\alpha(k)$ and makes at most $\alpha(k)$ queries
to the oracle.

To handle the pair $(M,\alpha)$, we find the first $f_n$
such that $n\ge N$ and $\QCMA^r(f_n)>2\alpha(|\langle n\rangle|)$.
Note that $|\langle n\rangle|=O(\log n)$, so
$2\alpha(|\langle n\rangle|)$ is a growth rate in $O(\polylog(n))$;
therefore, there must be infinitely many $f_n$ for which
$\QCMA^r(f_n)$ satisfies this condition.

Run $M(0^k)$ for $\alpha(k)$ steps, where
$k=|\langle n\rangle|$;
if it does not terminate, we consider the pair $(M,\alpha)$
handled, and move to the next pair. We can thus assume
it terminates, so let $C_k$ be the OQQV it outputs. If
$C_k$ takes in witnesses of size more than $\alpha(k)$
or if it makes more than $\alpha(k)$ queries or $r(k)$ many rounds of queries to the oracle,
we again consider the pair $(M,\alpha)$ handled. We can
thus assume $C_k$ uses witnesses of size at most $\alpha(k)$
and makes at most $\alpha(k)$ queries to the oracle in $r(k)$ rounds.

The circuit $C_k$, when given input $\langle n\rangle$,
defines a query $\QCMA^r$ protocol of cost at most $2\alpha(k)$:
it takes in a witness of size at most $\alpha(k)$
and makes at most $\alpha(k)$ queries in $r$ rounds to the oracle.
We note that the behavior of this query algorithm
might depend on the values of $\O$ that are outside
of the input to $f_n$; that is, on the values of $\O$
on oracle queries $\O(m,i)$ for $m\ne n$.
To ensure that the query algorithm is well-defined,
we fix all values of $\O$ that $C_k$ might query,
except for the values at $\O(n,i)$ (which will still
encode an input $x\in\Dom(f_n)$). Note that $C_k$
has finite size and can therefore only call $\O$
with finitely many input wires; we can thus 
fix only finitely many positions of $\O$ when ensuring
$C_k$ gives rise to a well-defined query QCMA algorithm
for $f_n$.

Since we've chosen $n$ so that $\QCMA^r(f_n)>2\alpha(k)$,
and since the cost of this $r$-round QCMA protocol is only $2\alpha(k)$,
it follows that this $r$-round QCMA query protocol is either not sound
or not complete: there is an input $x\in\Dom(f_n)$ on which
this query protocol misbehaves. We will pick this input
as our choice for $x^n$. This will ensure that $C_k$
either does not satisfy soundness for $L\cap\B^k$
or does not satisfy completeness, so $M$ is not a valid
$\QCMA^r$ protocol for $L$ relative to $\O$. We can
then set $N$ to be the new minimum number such that
the oracle $\O$ is unfixed for all $n\ge N$, and then
move on to the next pair $(M,\alpha)$.

This procedure iteratively defines the sequence
$x^n$. Each element in the sequence is eventually defined,
and they never change once defined. Therefore,
we get a well-defined infinite sequence $\{x^n\}_{n\in I}$,
and we conclude that the corresponding oracle $\O$
and language $L$ must satisfy $L\notin \QCMA^{\O,R}$.
\end{proof}

\end{document}